\newcommand{\pref}{\succ}
\newcommand{\score}{{{\mathrm{score}}}}
\newcommand{\naturals}{{{\mathbb{N}}}}
\newcommand{\np}{{\mathrm{NP}}}
\newcommand{\fpt}{{\mathrm{FPT}}}
\newtheorem{theorem}{Theorem}
\newtheorem{corollary}[theorem]{Corollary}
\newtheorem{remark}{Remark}
\newtheorem{definition}{Definition}
\newtheorem{example}{Example}
\newtheorem{proposition}[theorem]{Proposition}
\newcommand{\calR}{\mathcal{R}}
\newcommand{\cc}{{{{\mathrm{CC}}}}}
\newcommand{\sntv}{{{{\mathrm{SNTV}}}}}
\newcommand{\perf}{{{{\mathrm{Perf}}}}}
\newcommand{\bloc}{{{{\mathrm{Bloc}}}}}
\newcommand{\kborda}{{{{k\hbox{-}\mathrm{Borda}}}}}
\DeclareMathOperator*{\argmin}{arg\,min}
\newcommand{\pos}{{{{\mathrm{pos}}}}}
\def\N{\mathbb{N}}
\def\row#1#2{{#1}_1,\ldots ,{#1}_{#2}}
\newcommand{\fourvote}[4]{#1 \pref #2 \pref #3 \pref #4}
\newcommand{\shortcite}[1]{\cite{#1}}
\title{Multiwinner Analogues of Plurality Rule: \\ Axiomatic and Algorithmic Perspectives}
\author{
  \makebox[0.35\linewidth]{Piotr Faliszewski} \\ 
  AGH University \\ 
  Krakow, Poland 
\and 
  \makebox[0.35\linewidth]{Piotr Skowron} \\
  University of Oxford \\
  Oxford, UK
\and 
  \makebox[0.35\linewidth]{Arkadii Slinko} \\
  University of Auckland \\
  Auckland, New Zealand
\and 
  \makebox[0.35\linewidth]{Nimrod Talmon\footnote{Most of the work was done while the author was affiliated with TU Berlin (Berlin, Germany).}} \\
  Weizmann Institute of Science \\
  Rehovot, Israel
}
\begin{document}

\maketitle

\begin{abstract}
  We characterize the class of committee scoring rules that satisfy
  the fixed-majority criterion. In some sense, the committee scoring
  rules in this class are multiwinner analogues of the single-winner
  Plurality rule, which is uniquely characterized as the only
  single-winner scoring rule that satisfies the simple majority
  criterion. We define top-$k$-counting committee scoring rules and
  show that the fixed majority consistent rules are a subclass of the
  top-$k$-counting rules. We give necessary and sufficient conditions
  for a top-$k$-counting rule to satisfy the fixed-majority criterion.
  We find that, for most of the rules in our new class, the complexity
  of winner determination is high (that is, the problem of computing
  the winners is $\np$-hard), but we also show examples of rules with
  polynomial-time winner determination procedures. For some of the
  computationally hard rules, we provide either exact FPT algorithms
  or approximate polynomial-time algorithms.

%
\end{abstract}

\section{Introduction}
The scoring rules in general, and Plurality specifically,
are among the most often used and best studied single-winner voting rules.
Recently, multiwinner analogues of scoring rules---called committee
scoring rules---were introduced by Elkind et
al.~\cite{elk-fal-sko-sli:c:multiwinner-properties}, but our
understanding of them is so far quite limited.  In this paper, we seek
to somewhat rectify this situation by asking a seemingly
innocuous question: \emph{Among the committee scoring rules, which one
  is the analogue of Plurality?}  (The single-winner
Plurality rule elects the candidate who is listed as the most
preferred one by the largest number of voters.)  Using an axiomatic
approach, we find a rather surprising answer. Not only is there a
whole class of committee scoring rules that can be viewed as
corresponding to Plurality, but also one of the most
`obvious' candidates to be `the multiwinner Plurality'---the single
non-transferable vote rule (or SNTV, see the descriptions later)---falls
short of satisfying our criterion.  On the other hand, it turns out
that the Bloc rule is quite satisfying as `the multiwinner Plurality.'
Yet, 
it certainly is
not the only rule from our class and we believe that the other ones deserve
attention as well.

We complement our axiomatic study with an algorithmic analysis of this
new class of committee scoring rules. In particular, we show that it
can be seen as a subfamily of the OWA-based rules\footnote{OWA stands
  for ordered weighted average.} of Skowron, Faliszewski, and
Lang~\shortcite{sko-fal-lan:c:collective} (also studied by Aziz et
al.~\shortcite{azi-gas-gud-mac-mat-wal:c:approval-multiwinner,azi-bri-con-elk-fre-wal:c:justified-representation};
see also the work of
Kilgour~\shortcite{kil:chapter:approval-multiwinner} for a more
general overview of approval-based multiwinner rules and the work of
Elkind and Ismaili~\shortcite{conf/aldt/ElkindI15} for a different
OWA-based approach to multiwinner voting). However, the hardness
results for general OWA-based rules do not translate directly to our
case (and, indeed, some do not even hold).  
\smallskip

Let us now describe our framework more precisely. In the multiwinner
elections that we consider, each voter ranks the candidates from the
most desired one to the least desired one, and a voting rule is used
to pick a committee of a given size $k$ that, in some sense, best
reflects the voters' preferences.  Naturally, the exact meaning of the
phrase `best reflects' depends strongly on the application at hand, as
well as on the societal conventions and understanding of fairness. For
example, if we are to choose a size-$k$ parliament, then it is
important to guarantee proportional representation of certain
categories of the electorate (for example, political parties, ethnic or
gender groups, etc.); if the goal is to pick a group of products to
offer to customers, then it might be important to maintain
diversity of the offer; if we are to shortlist a group of candidates
for a job, then it is important to focus on
the quality of the selected candidates regardless of how similar some
of them might be 
(see, for example, the discussions provided by Lu and
Boutilier~\cite{bou-lu:c:chamberlin-courant,bou-lu:c:value-directed-cc},
Elkind et al.~\cite{elk-fal-sko-sli:c:multiwinner-properties}, and
Skowron et al.~\cite{sko-fal-lan:c:collective}).
%

In effect, there is quite a variety of multiwinner voting rules. For
example, under SNTV, the winning committee consists of $k$
candidates who are ranked first more frequently than all the others; under the
Bloc rule, each voter gives one point to each candidate he or she
ranks among his or her top $k$ positions, and the committee consists
of $k$ candidates with the most points; under the Chamberlin--Courant
rule, the winning committee consists of $k$ candidates such that each
voter ranks his or her most preferred committee member as highly as
possible (for the exact definition, we point the reader to
Section~\ref{sec:prelim}, to the original paper of Chamberlin and
Courant~\shortcite{cha-cou:j:cc}, or to papers studying the properties and the computational complexity
of this
rule~\cite{pro-ros-zoh:j:proportional-representation,bou-lu:c:chamberlin-courant,elk-fal-sko-sli:c:multiwinner-properties,conf/aldt/ElkindI15,sko-fal-sli:j:multiwinner,sko-fal:c:maxcover}).

It turns out that the three rules mentioned above are examples of
committee scoring rules, a class of rules which generalize
single-winner scoring rules to the multiwinner setting, recently
introduced by Elkind et
al.~\shortcite{elk-fal-sko-sli:c:multiwinner-properties} (see
Section~\ref{sec:prelim} for the definition).\footnote{Naturally,
  these rules were known much before Elkind et
  al.~\shortcite{elk-fal-sko-sli:c:multiwinner-properties} introduced
  the unifying framework for them.} Of course, there are natural
multiwinner rules that cannot be expressed as committee scoring rules,
such as the single transferable vote rule (STV), the Monroe
rule~\cite{mon:j:monroe}, or all the multiwinner rules based on the
Condorcet principle (see, for example, the works of Elkind et
al.~\shortcite{elk-lan-saf:c:condorcet-sets},
Fishburn~\shortcite{fis:j:condorcet-committee}, and
Gehrlein~\shortcite{geh:j:condorcet-committee}). Nonetheless, we
believe that committee scoring rules form a very diverse class of
voting rules that deserves a further study.

In this paper, we analyze committee scoring rules in our search of
a `multiwinner analogue' of the (single-winner) Plurality rule.
Intuitively, it might seem that SNTV, which picks $k$
candidates with best Plurality scores, is such a rule, thus rendering
this question trivial.  However, instead of following this intuition
we take an axiomatic approach. We note that Plurality is the only
single-winner scoring rule that satisfies the simple majority\footnote{In the literature, simple majority is often referred to as majority. However, we write `simple majority' to clearly distinguish it from qualified majority and from fixed-majority.} criterion,
which stipulates that a candidate ranked first by more than half of all 
voters must be a unique winner of the election. We ask for a
committee scoring rule that satisfies the fixed-majority criterion, a
multiwinner analogue of the simple majority criterion introduced by
Debord~\shortcite{deb:j:prudent}. It requires that if
there is a simple majority of the voters, each of whom ranks the same $k$
candidates in their top $k$ positions (perhaps in a different order),
then these $k$ candidates should form a unique winning committee.

With a moment of thought, one can verify that the Bloc rule satisfies
the fixed-majority criterion. It turns out, however, that Bloc is by
far not the only committee scoring rule having this property, and that there
is a whole class of them. We provide an (almost) full characterization
of this class\footnote{For technical reasons, we consider the case
  where there are at least twice as many candidates as the size of the
  committee (that is, $m \geq 2k$).} and analyze the computational complexity of winner
determination for the rules in this class.  Initially, we identify a
somewhat larger class (in terms of strict containment) of \emph{top-$k$-counting} rules for which the
score that a committee receives from a given voter is a function of
the number of committee members that this voter ranks in the top $k$
positions (recall that $k$ is the committee size); we refer to this
function as the \emph{counting function}.  We obtain the following
main results:
\begin{enumerate}
\item We prove that all committee scoring rules that satisfy the fixed-majority criterion are top-$k$-counting rules. 
  We establish conditions 
  on the counting function that are
  necessary and sufficient for the corresponding top-$k$-counting rule
  to satisfy the fixed-majority criterion.  These conditions are a
  fairly mild relaxation of convexity.  In particular, if the counting
  function is convex, then the corresponding top-$k$-counting rule
  satisfies the fixed-majority criterion.  


\item For a large class of counting functions, top-$k$-counting rules
  are $\np$-hard to compute (for example, we show an example of a rule
  that closely resembles the Bloc rule and is hard even to
  approximate). There are, however, some polynomial-time computable
  ones (for example, the Bloc and Perfectionist rules; where the
  latter one is introduced in this paper).

\item If the counting function is concave, then the rule it defines
  fails the fixed-majority criterion, but the rule seems to be easier
  computationally than in the convex case.  For this case, we show a
  polynomial-time ${(1-\frac{1}{e})}$-approximation algorithm as well
  as show fixed-parameter tractability with respect to the number of
  voters for the problem of finding the winning committee.

  
\end{enumerate}
All in all, we find that there is no single multiwinner analogue of Plurality, even if we restrict ourselves to polynomial-time
computable committee scoring rules. Indeed, on the intuitive level
SNTV is such a rule, and through our axiomatic consideration we show
that Bloc and Perfectionist are also good candidates.

Before we move on to the technical content of the paper, let us take a
step back and answer one principal question: \emph{why did we look for the
multiwinner analogue of Plurality?}
We believe that this quest deserves attention for the following two main reasons.
\begin{enumerate}

\item Until recently, multiwinner voting received only passing
  attention in the computational social choice literature, and it is
  still receiving only moderate attention in the social choice
  literature.  In effect, our understanding of multiwinner voting is
  currently limited. Since Plurality is among the most basic,
  best-known rules, asking for its generalization to the multiwinner
  setting is a natural, fundamental issue.

\item There is a growing number of applications of multiwinner rules,
  many of which are much closer to the worlds of artificial
  intelligence, multiagent systems, and business, than to the world of
  politics. For example, Lu and
  Boutilier~\cite{bou-lu:c:chamberlin-courant,bou-lu:c:value-directed-cc}
  discuss several applications pertaining to business settings, for example,
  how a company can decide which set of items to advertise to their
  clients if the total number of items that they may advertise is
  limited.
Skowron~\cite{skow:multiwinner-models} studies applicability of various multiwinner rules to running different types of indirect elections. 
Elkind et
al.~\cite{elk-fal-sko-sli:c:multiwinner-properties} and Skowron et
al.~\cite{sko-fal-lan:c:collective} discuss further applications,
ranging from shortlisting candidates 
through various types of resource allocation tasks 
to choosing committees of representatives such as parliaments. 

\end{enumerate}
For the above two reasons, we believe that it is important to
understand fundamental properties of various (families of) multiwinner
rules. In this paper we explore and study one such fundamental
property and the corresponding family of rules.


\section{Preliminaries}\label{sec:prelim}

An election is a pair $E = (C,V)$, where $C = \{c_1, \ldots, c_m\}$ is
a set of candidates and $V = (v_1, \ldots, v_n)$ is a collection of
voters. 
Throughout the paper, we reserve the symbol $m$ to denote the number
of candidates.  Each voter $v_i$ is associated with a preference
order~$\pref_i$ in which $v_i$ ranks the candidates from his or her
most desirable one to his or her least desirable one. If $X$ and $Y$
are two (disjoint) subsets of $C$, then by $X \pref_i Y$ we mean that
for each $x \in X$ and each $y \in Y$ it holds that $x \pref_i y$. For
a positive integer $t$, we denote the set $\{1, \ldots, t\}$ by $[t]$.

\paragraph{Single-Winner Voting Rules.}
A single-winner voting rule~$\calR$ is a function that, given an
election $E = (C,V)$, outputs a subset $\calR(E)$ of candidates that
are called (tied) winners of this election. There is quite a variety
of single-winner voting rules, but in this paper it suffices to
consider scoring rules only.  Given a voter~$v$ and a candidate~$c$,
we write $\pos_v(c)$ to denote the position of $c$ in $v$'s preference
order (for example, if $v$ ranks $c$ first then $\pos_v(c) = 1$).  A
scoring function for $m$ candidates is a function $\gamma_m \colon [m]
\rightarrow \naturals$ such that for each $i \in [m-1]$ we have
$\gamma_m(i) \geq \gamma_m(i+1)$. Each family of scoring functions $\gamma=(\gamma_{m})_{m \in \naturals}$ (one function for each possible
choice of $m$) defines
a voting rule $\calR_\gamma$
as follows.  Let $E = (C,V)$ be an election with $m$
candidates. Under $\calR_\gamma$, each candidate
$c \in C$ receives $\score(c) := \sum_{v \in V}\gamma_m(\pos_v(c))$
points and the candidate with the highest number of points wins. (If
there are several such candidates, then they all tie as winners.)
We often refer to the value $\score(c)$ as the $\gamma$-score of $c$.

The following scoring functions and scoring rules are particularly
interesting.  The $t$-approval scoring function $\alpha_t$ is defined
as $\alpha_t(i) := 1$ for $i \leq t$ and $\alpha_t(i) := 0$
otherwise. (If $t$ is fixed, then the definition of $\alpha_t$ does
not depend on $m$. In such cases $\alpha_t$ can both be viewed as a
scoring function and as a family of scoring functions.)
%
%
For example, Plurality is $\calR_{\alpha_1}$, the
$t$-Approval rule is $\calR_{\alpha_t}$, and the Veto rule is
$\calR_{(\alpha_{m-1})_{m \in \naturals}}$.
The Borda scoring function (for $m$ candidates), $\beta_m$, is defined as 
%
  $\beta_m(i) := m - i$,
%
and $\calR_{\beta}$ is the Borda rule, where $\beta = (\beta_{m})_{m \in \naturals}$.

\paragraph{Multiwinner Voting Rules.}
A multiwinner voting rule $\calR$ is a function that, given an
election $E = (C,V)$ and a number~$k$ representing the size of the
desired committee, outputs a family $\calR(E,k)$ of size-$k$ subsets
of~$C$. The sets in this family are the committees that tie as
winners. 
As in the case of single-winner voting rules, one may need a
tie-breaking rule to get a unique winning committee, but we ignore
this aspect in the current paper.

We focus on the committee scoring rules, introduced by Elkind et
al.~\shortcite{elk-fal-sko-sli:c:multiwinner-properties}. Consider an
election $E = (C,V)$ and some committee $S$ of a given size $k$.  Let
$v$ be some voter in $V$. By $\pos_v(S)$ we mean the sequence $(i_1,
\ldots, i_k)$ that results from sorting the set $\{ \pos_v(c) \colon c
\in S\}$ in increasing order. For example, if $C = \{a, b, c, d, e\}$,
the preference order of $v$ is $a \succ b \succ c \succ d \succ e$,
and $S = \{a, c, d\}$, then $\pos_v(S) = (1, 3, 4)$.  If $I = (i_1,
\ldots, i_k)$ and $J = (j_1, \ldots, j_k)$ are two increasing
sequences of integers, then we say that $I$ \emph{(weakly) dominates} $J$
(denoted $I \succeq J$) if $i_t \leq j_t$ for each $t \in [k]$.  For
positive integers $m$ and $k$, $k \leq m$, by $[m]_k$ we mean the set
of all increasing size-$k$ sequences of integers from~$[m]$.

\begin{definition}\label{def:committee-scoring-function}
  [Elkind et al.~\shortcite{elk-fal-sko-sli:c:multiwinner-properties}]
  A committee scoring function for a multiwinner election with $m$
  candidates, where we seek a committee of size $k$, is a function $f_{m,k}
  \colon [m]_k \to \mathbb{N}$ such that for each two sequences $I,J
  \in [m]_k$ it holds that if $I \succeq J$ then $f(I) \geq f(J)$.
\end{definition}

Intuitively, the function $f_{m, k}$ from Definition~\ref{def:committee-scoring-function} assigns to each sequence $I$ of $k$ positions
the number of points that a committee $C$ gets from a voter $v$ when the members of $C$ stand on positions from $I$ in the preference order of $v$.

A committee scoring rule is defined by a family of committee scoring functions $f=(f_{m,k})_{k\le m}$, one function for each possible
choice of $m$ and $k$. Analogously to the case of single-winner scoring rules, we will denote such a multiwinner rule by $\calR_f$. Let $E = (C,V)$ be an election with $m$ candidates, let $k$, $k \leq
m$, be the size of the desired committee. Under the committee scoring rule $\calR_{f}$, every
committee $S \subseteq C$ with $|S|=k$ receives $\score(S) := \sum_{v
  \in V}f_{m, k}(\pos_v(S))$ points (for this notation, the election $E = (C,V)$
will always be clear from the context). The committee with the highest
score wins. (If there are several such committees, then they all tie
as winners.)

Many well-known multiwinner voting rules are, in
fact, families of committee scoring rules.
Consider the following
examples: 
%
\begin{enumerate}

\item SNTV, Bloc, and $k$-Borda rules pick $k$ candidates with the highest Plurality,
  $k$-Approval, and Borda scores, respectively. Thus, they are defined 
  through the following scoring functions:
  \begin{align*}
    f^\sntv_{m,k}\ \ (\row ik) &:= \textstyle\sum_{t=1}^{k}\alpha_1(i_t) = \alpha_1(i_1), \\
    f^\bloc_{m,k}\ \ \ \ \ (\row ik) &:= \textstyle\sum_{t=1}^{k}\alpha_k(i_t),\\
    f^\kborda_{m,k}(\row ik)& := \textstyle\sum_{t=1}^k\beta_m(i_t),
  \end{align*}
  respectively. Note that $f^\sntv_{m,k}$ is defined as a sum of
  functions that do not depend on either $m$ or $k$, $f^\bloc_{m,k}$
  is defined as a sum of functions that depend on $k$ but not $m$, and
  $f^\kborda_{m,k}$ is defined as a sum of functions that depend on
  $m$ but not $k$. 





\item The two versions of the Chamberlin--Courant rule that we
  consider are defined through the committee scoring functions:
  \begin{align*}
    f^{\beta\hbox{-}\cc}_{m,k}\ (\row ik) &:= \beta_m(i_1),\\
    f^{\alpha_k\hbox{-}\cc}_{m,k}(\row ik)&:=\alpha_k(i_1),
  \end{align*}
  respectively.  The first one defines the classical
  Chamberlin-Courant rule~\cite{cha-cou:j:cc} and the second one
  defines what we refer to as $k$-Approval Chamberlin--Courant rule
  (approval-based variants of the Chamberlin--Courant rule were
  introduced by Procaccia, Rosenschein, and
  Zohar~\cite{pro-ros-zoh:j:proportional-representation}, and were
  later studied, for example, by Betzler, Slinko, and
  Uhlman~\cite{bet-sli-uhl:j:mon-cc}, Aziz et
  al.~\shortcite{azi-bri-con-elk-fre-wal:c:justified-representation}
  and Skowron and Faliszewski~\cite{sko-fal:c:maxcover}). For brevity,
  we sometimes refer to $k$-Approval Chamberlin--Courant rule as the
  $\alpha_k${-}CC rule.
  
  Intuitively, under the Chamberlin--Courant rules, each voter is
  represented by the committee member that this voter ranks highest;
  the Chamberlin--Courant rule chooses a committee $C$ that maximizes
  the sum of the scores that the voters give to their representatives
  in $C$ (which characterizes the total satisfaction of the society
  with the assignment of representatives to
  voters). 

  



\end{enumerate}

\newcommand{\evote}[8]{ #1 \succ #2 \succ #3 \succ #4 \succ #5\succ #6\succ #7 \succ #8}

In the next example we show the differences between the above
rules. Since these rules are designed to satisfy different desiderata,
it turns out that on the same election they may provide significantly
different outcomes.

\begin{example}\label{example:1}
  Let us consider the set of candidates $C = \{a,b,c,d,e,f,g,h\}$ and
  eight voters with the following preference orders:
  \begin{align*}
     v_1 &\colon \evote a f c g h e b d , &
     v_2 &\colon \evote c e g h a f b d , \\
     v_3 &\colon \evote a f c h g e b d , &
     v_4 &\colon \evote d e h g a f b c , \\
     v_5 &\colon \evote b c g h a e f d , &
     v_6 &\colon \evote e g d h a b f c , \\
     v_7 &\colon \evote b d h g a e f c , &
     v_8 &\colon \evote f h d g a b e c.
  \end{align*}
Let the committee size $k$ be $2$.  It is easy to compute the
  winners under the SNTV and Bloc rules. For the former, the unique
  winning committee is $\{a,b\}$ (these are the only two candidates
  that are ranked in the top positions twice, and for the latter it is $\{e,f\}$
  (these are the only two candidates that are ranked among top two
  positions three times; all the other candidates are ranked there at
  most twice).  A somewhat tedious calculation shows that the unique
  $k$-Borda winning committee is $\{g,h\}$. (The Borda scores of the
  candidates $a, b, c, d, e, f, g, h$ are, respectively:
  \[
  32,\ 22,\ 23,\ 23,\ 28,\ 26,\ 35,\ 35.
  \]
  Finally, further calculations show that, under the (classical)
  Chamberlin--Courant rule, the unique winning committee is $\{c,d\}$.
  (While it is tedious to compute these results by hand, and indeed we
  used a computer to find them, the intuition for the $k$-Borda and
  Chamberlin--Courant winners is as follows: $g$ and $h$ are always
  ranked in the middle of each vote, or slightly above, so that they
  get high total Borda score, whereas $c$ and $d$ are ranked so that
  one of them is (almost) always ahead of $g$ and $h$, whereas the
  other is in the last position. This way, as representatives, $c$ and
  $d$ get higher scores than $g$ and $h$, even though their total
  Borda score is lower.)  Finally, it is relatively easy to verify
  that under $\alpha_k$-CC, the winning committee is $\{e,f\}$ (its
  $\alpha_k$-CC score is six; there is no other committee whose
  members are ranked among top two positions of six or more voters).
\end{example}

Following the nomenclature of Elkind et
al.~\shortcite{elk-fal-sko-sli:c:multiwinner-properties}, we say that
a committee scoring rule $\calR_f$ defined by the family $f=(f_{m,k})_{k\le m}$ 
of committee scoring functions  is \emph{weakly separable} if  there exists a family $(\gamma_{m,k})_{k\le m}$ 
of single-winner scoring functions, $\gamma_{m,k}\colon [m]_k\to \N$ such that:
\begin{align*}
    f_{m,k}(i_1, \ldots, i_k) = \sum_{t=1}^{k}\gamma_{m,k}(i_t).
\end{align*}
Intuitively, under a weakly separable rule, we can compute the scores
of all candidates separately and the rule picks up the $k$ candidates
with the highest scores.  $\calR_f$ is called \emph{separable} if for
fixed $m$ the function $\gamma_{m,k}$ does not depend on $k$. We see
that SNTV and $k$-Borda are separable, that Bloc is only weakly
separable, and that neither of our two versions of the
Chamberlin-Courant rule is weakly separable.  Elkind et
al.~\shortcite{elk-fal-sko-sli:c:multiwinner-properties} show that
separable rules have somewhat different properties than weakly
separable ones.

Our two variants of the Chamberlin--Courant rule are what Elkind et
al.~\cite{elk-fal-sko-sli:c:multiwinner-properties} call
\emph{representation focused} rules. A committee scoring rule
$\calR_f$, defined through a family $f = (f_{m,k})_{k \leq m}$ of
committee scoring functions, is representation focused if there exists
a family of single-winner scoring functions $\gamma =
(\gamma_{m,k})_{k \leq m}$ such that:
\[
  f_{m,k}(i_1, \ldots, i_k) = \gamma_{m,k}(i_1).
\]
Both Chamberlin--Courant and $\alpha_k$-CC are representation-focused,
but, somewhat surprisingly, so is SNTV.

The above two classes of rules (weakly separable ones and
representation-focused ones) can be further generalized. Recently,
Skowron, Faliszewski, and Lang~\shortcite{sko-fal-lan:c:collective}
introduced a new class of multiwinner rules based on OWA
operators\footnote{OWA stands for ``ordered weighted average''. OWA
  operators were introduced by Yager~\shortcite{yag:j:owa} in the
  context of multicriteria decision making. Kacprzyk et
  al.~\shortcite{kac-nur-zad:b:owa-social-choice} describe their
  applications in the context of collective choice.} (a variant of
this class was also studied by Aziz et
al.~\shortcite{azi-bri-con-elk-fre-wal:c:justified-representation,azi-gas-gud-mac-mat-wal:c:approval-multiwinner};
Elkind and Ismaili~\cite{conf/aldt/ElkindI15} consider a different
family of multiwinner rules defined through OWA operators as
well). While they did not directly consider elections based on
preference orders, we can apply their main ideas to committee scoring
rules.

An OWA operator $\Lambda$ of dimension $k$ is a sequence $\Lambda =
(\lambda^1, \ldots, \lambda^k)$ of nonnegative reals.\footnote{We
  slightly generalize the notion and, unlike
  Yager~\shortcite{yag:j:owa}, we do not require that $\lambda^1+
  \ldots+ \lambda^k=1$.}

%
%

\begin{definition}\label{def:owa-csr}
  Let $\Lambda=(\Lambda_{m, k})_{k \leq m}$ be a family of OWA
  operators such that $\Lambda_{m, k} = (\lambda^1_{m, k}$, $\ldots,
  \lambda^k_{m, k})$ has dimension $k$ (one size-$k$ vector for each
  pair $m, k$). Let $\gamma =(\gamma_{m,k})_{k\le m}$ be a family of
  (single-winner) scoring function (one scoring function for each pair
  $m, k$). Then $\gamma$ together with $\Lambda$ define a family of committee
  scoring functions $f = f_{m,k}(\Lambda,\gamma)$ such that for
  each $(i_1, \ldots, i_k) \in [m_k]$ we have:
\begin{align*}
    f_{m,k}(i_1, \ldots, i_k) = \sum_{t=1}^{k}\lambda^t_{m,k} \gamma_{m,k}(i_t).
\end{align*}
The committee scoring rule $\calR_f$ corresponding to the family $f$ 
is called {\em OWA-based}. 
\end{definition}

%

Intuitively, the OWA operators specify to what extent the voters care
about each  member of the committee, depending how this member is 
ranked among the other ones.
Indeed, every weakly separable rule is OWA-based with OWA operators of
the form $(1, \ldots, 1)$, which means that under weakly separable
rules the voters care about all the committee members equally.
%
%
Representation-focused rules are OWA-based through OWA operators of
the form $(1,0, \ldots, 0)$, which intuitively means that the voters
care about their top-ranked committee members only.
%
%
Another interesting group of OWA-based committee scoring rules is
defined through OWA operators of the form $(1, \frac{1}{2}, \ldots,
\frac{1}{k})$ and the $t$-Approval scoring functions (for some choice
of $t$).  We refer to these rules as $\alpha_t$-PAV (in essence, these are
variants of the Proportional Approval Voting rule, cast into the
framework of committee scoring rules by assuming that every voter
approves exactly his or her top $t$ candidates).  Intuitively, such
OWA operators indicate the decreasing attention the voters pay to
their lower ranked committee members.
%
For more discussion of the OWA-based rules, we refer the reader to the
works of Skowron et al.~\cite{sko-fal-lan:c:collective} and Aziz et
al.~\cite{azi-bri-con-elk-fre-wal:c:justified-representation,azi-gas-gud-mac-mat-wal:c:approval-multiwinner}
(the latter ones include a more detailed discussion of PAV; see also
the work of Kilgour~\cite{kil:chapter:approval-multiwinner} for a
description of this rule).


%

\begin{remark}
  We note that in most cases the OWA vectors $\Lambda_{m, k}$ used to
  define OWA-based rules do not depend on $m$. Yet, formally, we allow
  for such a dependency in order to build the relation between our
  general framework in which committee scoring functions $f_{m, k}$
  might depend on $m$ in any, even not very intuitive, way, and the
  world of OWA-based rules.
\end{remark}

\section{Fixed-Majority Consistent Rules}\label{sec:fmcr}

We now start our quest for finding committee scoring rules that can be
seen as multiwinner analogues of Plurality.  We begin by describing
the fixed-majority criterion that, in our view, encapsulates the idea
of closeness to Plurality.  Then, we provide a class of committee
scoring rules---the top-$k$-counting rules---that contains all the
rules which satisfy the fixed-majority criterion.  Finally, we provide
a complete characterization of fixed-majority voting rules within the
class of top-$k$-counting voting rules.

\subsection{Initial Remarks}

One of the features that distinguishes Plurality among all
other scoring rules is the fact that it satisfies the simple majority
criterion.

\begin{definition}
  A single-winner voting rule $\calR$ satisfies the {\em simple majority criterion}
  if, for every election $E = (C,V)$ where more than half of the voters
  rank some candidate $c$ first, it holds that $\calR(E) = \{c\}$.
\end{definition}

Importantly,
the simple majority criterion characterizes Plurality within the class of single-winner scoring rules.
The result is a part of folklore (we provide the proof for the sake of completeness).

\begin{proposition}
  Let $\gamma = (\gamma_m)_{m \in \naturals}$ be a family of single-winner scoring functions, such that the scoring rule 
  $\calR_\gamma$ satisfies the simple majority criterion. Then for each $m$ it holds that $\gamma_m(1) >
  \gamma_m(2) = \cdots = \gamma_m(m)$, and thus $\calR_\gamma$ coincides with Plurality.
\end{proposition}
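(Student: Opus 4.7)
The plan is to split the conclusion into two pieces, the strict inequality $\gamma_m(1) > \gamma_m(2)$ and the chain of equalities $\gamma_m(2) = \cdots = \gamma_m(m)$, and to prove each one by exhibiting an election on which the simple majority criterion forces a constraint on $\gamma_m$. For $m \leq 2$ the chain of equalities is vacuous, so only the strict inequality is needed there; for $m \geq 3$ both parts must be established.

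For the strict inequality I would use a one-voter election: a lone voter ordering the candidates as $c_1 \succ c_2 \succ \cdots \succ c_m$ already constitutes a majority, so $\calR_\gamma$ must return $\{c_1\}$ uniquely. Since $c_i$ receives score $\gamma_m(i)$, uniqueness immediately forces $\gamma_m(1) > \gamma_m(i)$ for every $i \geq 2$, in particular $\gamma_m(1) > \gamma_m(2)$.

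The substantive step is ruling out any drop $\gamma_m(j) > \gamma_m(j+1)$ with $j \in \{2, \ldots, m-1\}$. Suppose such a $j$ exists, and put $d := \gamma_m(j) - \gamma_m(j+1) > 0$ and $L := \gamma_m(1) - \gamma_m(j) \geq 0$. Choose an integer $s$ with $sd > L$, pick any $c^* \neq c_1$, and build an election $E$ on $2s+1$ voters in two blocks. In the first block ($s+1$ voters), $c_1$ sits at position $1$ and $c^*$ at position $j$; in the second block ($s$ voters), $c^*$ sits at position $1$ and $c_1$ at position $j+1$; in both blocks the remaining $m-2$ candidates fill the remaining $m-2$ positions arbitrarily. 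The availability of three distinct positions $1, j, j+1$ in $[m]$ is guaranteed by $j \in \{2,\ldots,m-1\}$. A direct calculation, in which every comparison other than the $j$-vs-$(j+1)$ swap cancels, gives
\[
\score(c^*) - \score(c_1) \;=\; sd - L \;>\; 0.
\]
But $c_1$ is ranked first by $s+1 > (2s+1)/2$ voters, so the simple majority criterion demands $\calR_\gamma(E) = \{c_1\}$, contradicting $\score(c^*) > \score(c_1)$.

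The only delicate point---and the place where the real thinking has to happen---is the design of the two-block construction above. The positions of $c_1$ and $c^*$ must be arranged so that the score difference collapses to exactly $sd - L$: the asymmetry at positions $j$ and $j+1$ is paid $s$ times, while the head-start that $c_1$ enjoys from being at position $1$ is paid only once, which is what allows the free parameter $s$ to overwhelm $L$ no matter how large $\gamma_m(1)$ is. Once this swap construction is in place the conclusion $\gamma_m(1) > \gamma_m(2) = \cdots = \gamma_m(m)$, and hence coincidence of $\calR_\gamma$ with Plurality, is immediate.
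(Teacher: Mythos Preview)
Your proof is correct and follows essentially the same strategy as the paper: build a two-block election in which a bare majority ranks $c_1$ first, then scale the size so that a competing candidate overtakes $c_1$ unless the tail of $\gamma_m$ is flat. The differences are tactical. For the strict inequality you use a one-voter election, which is cleaner than the paper's indirect route (the paper first observes $\gamma_m(1)>\gamma_m(m)$ because otherwise every candidate ties, and only obtains $\gamma_m(1)>\gamma_m(2)$ after the equalities are in hand). For the equalities, you attack each potential drop $\gamma_m(j)>\gamma_m(j+1)$ with a tailored construction, whereas the paper uses a single cyclic-shift construction ($n{+}1$ copies of $c_1\succ c_2\succ\cdots\succ c_m$ and $n$ copies of $c_2\succ\cdots\succ c_m\succ c_1$) to force $\gamma_m(2)=\gamma_m(m)$ in one shot, and then lets the monotonicity of $\gamma_m$ propagate this to $\gamma_m(2)=\cdots=\gamma_m(m)$. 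The paper's version is marginally slicker because it exploits monotonicity to avoid the per-$j$ case split; your version is self-contained and does not rely on that structural fact.
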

\begin{proof}
  It is straightforward to verify that if for each $m$ we have
  $\gamma_m(1) > \gamma_m(2) = \cdots = \gamma_m(m)$ then
  $\calR_\gamma$ satisfies the simple majority criterion.  For the
  other direction, assume that $\calR_\gamma$ satisfies the simple
  majority criterion. This immediately implies that for each $m \geq
  2$ we have $\gamma_m(1) > \gamma_m(m)$ (otherwise all the candidates
  would always tie as winners). Hence for $m=2$ the result follows.
  
  Let us fix $m \geq 3$. For each positive integer $n$,
  define the election $E_n = (C,V_n)$ with the candidate set $C = \{c_1,
  \ldots, c_m\}$ and with $V_n$ containing:
  \begin{align*}
    &n+1 \text{ voters with preference order } c_1 \pref c_2 \pref \cdots \pref c_m \text{, and} \\
    &n \text{ voters with preference order } c_2 \pref c_3 \pref \cdots \pref c_m \pref c_1.
  \end{align*}
  Since $\calR_\gamma$ satisfies the simple majority criterion, it
  must be the case that $c_1$ is the unique $\calR_\gamma$-winner for
  each $E_n$.   Further, for a
  given value of $n$, the difference between the scores of $c_1$ and
  $c_2$ in $E_n$ is:
  \begin{align*}
     \score(c_1) - \score(c_2) &= \big((n+1)\gamma_m(1) + n \gamma_m(m) \big) 
                               - \big( (n+1)\gamma_m(2) + n\gamma_m(1) \big) \\
     &= \gamma_m(1) - \gamma_m(2) + n\big( \gamma_m(m) - \gamma_m(2)\big).
  \end{align*}
  Thus, if it held that $\gamma_m(2) > \gamma_m(m)$, then---for large
  enough value of $n$---candidate $c_1$ would not be a winner of
  $E_n$. This implies that $\gamma_m(2) = \cdots = \gamma_m(m)$. Since
  $\gamma_m(1) > \gamma_m(m)$, we reach the conclusion that $\gamma_m(1) >
  \gamma_m(2) = \cdots = \gamma_m(m)$.
\end{proof}



There are at least two ways of generalizing the simple majority
criterion to the multiwinner setting. We choose perhaps the simplest
one, the \emph{fixed-majority criterion} introduced by
Debord~\shortcite{deb:j:prudent} (other notions of majority studied by
Debord are variants of the Condorcet principle and are incompatible
with Plurality and scoring rules in general).

\begin{definition}
  A multiwinner voting rule~$\calR$ satisfies the fixed-majority
  criterion for $m$ candidates and committee size $k$ if, for every
  election $E = (C,V)$ with $m$ candidates, the following holds: if
  there is a committee $W$ of size $k$ such that more than half of the
  voters rank all the members of $W$ above the non-members of $W$,
  then $\calR(E,k) = \{W\}$.
  We say that $\calR$ satisfies the fixed-majority criterion if it
  satisfies it for all choices of $m$ and $k$ (with $k\le m$).
\end{definition}



\begin{remark}
Another way of extending the simple majority criterion to the multiwinner
case would be to say that, if a committee~$W$ is such that for each
$c \in W$ a majority of voters rank $c$ among their top $k$ positions (possibly a different
majority for each $c$), then $W$ must be a winning committee.
However, consider the following votes over the candidate set $\{a,b,c\}$:
%
\begin{align*}
  v_1 \colon a > b > c, \quad
  v_2 \colon a > c > b, \quad
  v_3 \colon b > c > a. \quad
\end{align*}
For $k=2$, all three committees, $\{a,b\}$, $\{a,c\}$, and $\{b,c\}$,
have majority support in the sense  just described. We feel that 
this is 
against the spirit of the 
simple majority criterion.
Thus, and since we have not found any other convincing ways of
generalizing the simple majority criterion to the multiwinner setting,
we focus on Debord's fixed-majority notion.
\end{remark}

One can verify that the Bloc rule satisfies the fixed-majority
criterion and that SNTV does not (it will also follow formally from
our further discussion).  This means that in the axiomatic sense, Bloc
is closer to Plurality than SNTV. This is quite interesting since
one's first idea of generalizing Plurality would likely be to think of
SNTV.  Yet, Bloc is certainly not the only committee scoring rule that
satisfies 
our criterion.  Let us consider the following rule.

\begin{definition}
  Let $k$ be the size of committee to be elected.  The
  Perfectionist rule is defined through the family $f$ of scoring function $f_{m,k}$
  such that $f_{m,k}(i_1, \ldots, i_k) = 1$, if $(i_1, \ldots, i_k) =
  (1, \ldots, k)$, and $f_{m,k}(i_1, \ldots, i_k) = 0$, otherwise. In
  other words, a voter gives score of 1 to a committee only if its
  members occupy the top $k$ positions of his or her vote.
\end{definition}

Alternatively, the Perfectionist rule can be viewed as an OWA-based
rule defined by the family $\Lambda =(\Lambda_{m, k})_{k \leq m}$ of
OWA operators, where $\Lambda_{m, k}=(0,0,\ldots,1)$, and the family
of $k$-Approval scoring function $\gamma_{m, k} = \alpha_k$.

\begin{example}\label{example:2}
  Let us, once again, consider the election from
  Example~\ref{example:1}.  In this election, for the committee size
  $k = 2$, Perfectionist assigns two points to committee $\{a,f\}$,
  one point to each of $\{b,c\}$, $\{b,d\}$, $\{c,e\}$, $\{d, e\}$,
  $\{e,g\}$, and $\{f,h\}$, and zero points to all the other
  committees. Thus, $\{a,f\}$ is the unique winning committee.
\end{example}

We note that Perfectionist satisfies the fixed-majority criterion and
that it closely resembles Plurality.  The following remark strongly
highlights this similarity.


\begin{remark}
Consider a situation where the voters extend their rankings of candidates to 
rankings of committees in some natural way.
Then, for each voter, the best committee would consist of his or her $k$ best candidates.
As a result,
running Plurality on the profile of preferences over the
committees would give the same result as running Perfectionist over the profile of preferences over the candidates.
\end{remark}

%
Naturally,
not all committee scoring rules satisfy the fixed-majority criterion.
For example,
neither $k$-Borda nor the
Chamberlin--Courant rule do.  
To see this, it suffices to note that
for $k = 1$ they both become the single-winner Borda rule, which fails
the simple majority criterion.

In what follows,
we will be interested in knowing which committee scoring rules do satisfy the fixed majority criterion and which do not.


\subsection{Top-$\boldsymbol{k}$-Counting Rules}

To characterize the committee scoring rules that satisfy the
fixed-majority criterion, we introduce a class of scoring
functions that depend only on the number of committee members ranked
in the top $k$ positions.

\begin{definition}\label{topkcounting}
We say that a committee scoring function $f_{m, k}\colon [m]_k\to\N$,  is 
  \emph{top-$k$-counting} if there is a function $g_{m, k} \colon \{0,
  \ldots, k\} \rightarrow \mathbb{N}$ such that $g_{m, k}(0)=0$ and for each $(i_1,
  \ldots, i_k) \in [m]_k$ we have $f_{m, k}(i_1, \ldots, i_k) = g_{m, k}( | \{ t \in
  [k] \colon i_t \leq k\} | )$.
We refer to $g_{m, k}$ as the \emph{counting function} for $f_{m,k}$.
%
%
%
We say that a committee scoring rule $\calR_f$ is
\emph{top-$k$-counting} if it can be defined through a family of
top-$k$-counting scoring functions $f=(f_{m,k})_{k\le m}$.
\end{definition}

Both Bloc and Perfectionist 
are top-$k$-counting rules. 
The former uses the linear counting function $g_{m, k}(x) = x$,
while the latter uses the counting function $g_{m, k}$ which is a
step-function: $g_{m, k}(x) = 0$ for $x<k$ and $g_{m, k}(k) = 1$.
Another example of a top-$k$-counting rule is the $\alpha_k$-CC rule,
which uses the counting function
$g_{m, k}$ such that $g_{m, k}(0) = 0$
and $g_{m, k}(x) = 1$ for all $x \in [k]$.\par\medskip

Top-$k$-counting rules have a number of interesting features.  First,
their counting functions have to be nondecreasing. Second, every
top-$k$-counting rule is OWA-based. Third, every committee scoring
rule that satisfies the fixed-majority criterion is
top-$k$-counting. We express these facts in the following two
propositions and in Theorem~\ref{pro:fix-maj-top-k}.  For the rest of
the paper we make the assumption that $m \geq 2k$;
this assumption is mostly technical as our arguments are greatly
simplified by the fact that we can form two disjoint
committees of size~$k$. Further, it is also quite natural: one could say that if
we were to choose a committee consisting of more than half of the
candidates, then perhaps we should rather be voting for who should
\emph{not} be in the elected committee.

\begin{proposition}\label{pro:g-nondec}
  Let  $m\ge 2k$ and let $f_{m,k}\colon [m]_k\to\N$ be a top-$k$-counting scoring function
  defined through a counting function $g_{m, k}$. Then, $g_{m, k}$ is nondecreasing.
\end{proposition}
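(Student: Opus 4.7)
The goal is to show $g_{m,k}(j) \leq g_{m,k}(j+1)$ for every $j \in \{0, 1, \ldots, k-1\}$, which (together with $g_{m,k}(0)=0$) will give monotonicity. The plan is to exploit the defining monotonicity property of committee scoring functions, namely that $I \succeq J$ implies $f_{m,k}(I) \geq f_{m,k}(J)$. So I need, for each such $j$, a pair of sequences $J \succeq$ by $I$ where $I$ has exactly $j+1$ entries in $[k]$ and $J$ has exactly $j$ such entries.

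The natural construction is:
\begin{align*}
J &= (1, 2, \ldots, j,\ k+1, k+2, \ldots, 2k-j), \\
I &= (1, 2, \ldots, j+1,\ k+2, k+3, \ldots, 2k-j).
\end{align*}
First I would verify the book-keeping: both sequences have length $k$, both are strictly increasing (here the inequality $j+1 \leq k < k+2$ is used in $I$), and both lie in $[m]_k$ because the largest entry is $2k - j \leq 2k \leq m$, invoking the standing assumption $m \geq 2k$. A direct component-wise check shows $I \succeq J$: the first $j$ entries agree, at position $j+1$ we have $j+1 \leq k+1$, and from position $j+2$ onward the two sequences again agree on $k+2, \ldots, 2k-j$. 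The edge case $j = k-1$ simply yields $I = (1, \ldots, k)$ and $J = (1, \ldots, k-1, k+1)$, which still satisfies the domination.

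Counting entries in $[k]$: in $I$, the entries $1, \ldots, j+1$ are $\leq k$ (since $j+1 \leq k$) and the entries from $k+2$ onward are $> k$, giving exactly $j+1$; in $J$, the entries $1, \ldots, j$ are $\leq k$ while $k+1, \ldots, 2k-j$ are all $> k$, giving exactly $j$. Since $f_{m,k}$ is top-$k$-counting, this yields
\[
  g_{m,k}(j+1) = f_{m,k}(I) \geq f_{m,k}(J) = g_{m,k}(j),
\]
where the inequality is by monotonicity of the committee scoring function. Combined with $g_{m,k}(0)=0$ (so that in particular $g_{m,k}(1) \geq 0$), this establishes that $g_{m,k}$ is nondecreasing on $\{0, 1, \ldots, k\}$.

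The proof is essentially a short combinatorial construction; the only mild subtlety is making sure the construction is valid at the endpoints ($j=0$ and $j=k-1$) and that the $m \geq 2k$ hypothesis is what makes room for the elements $k+2, \ldots, 2k-j$ to live inside $[m]$. No deeper obstacle is expected.
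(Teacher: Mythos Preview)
Your proposal is correct and follows essentially the same approach as the paper: construct two comparable sequences in $[m]_k$ having respectively $j$ and $j+1$ entries in $[k]$, and invoke the monotonicity of $f_{m,k}$ under $\succeq$. The only cosmetic difference is the precise choice of the dominating sequence (the paper uses $I_{j+1}=(1,\ldots,j+1,k+1,\ldots,2k-j-1)$ rather than your $(1,\ldots,j+1,k+2,\ldots,2k-j)$), and your remark about $g_{m,k}(0)=0$ is superfluous since the inequality $g_{m,k}(j)\le g_{m,k}(j+1)$ for all $j$ already is the desired monotonicity.
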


\begin{proof}
%
  Let $t\in \{0,\ldots,k\}$. Consider the sequences $I_t =
  (1, \ldots,t,k+1, \ldots, k+(k-t))$ and $I_{t+1} = (1,
  \ldots,t+1,k+1, \ldots, k+(k-t-1))$ from $[m]_k$. (Note that we need $m\ge 2k$ for defining $I_0$.) Since $I_{t+1}
  \succeq I_{t}$, we have that $f_{m,k}(I_{t+1}) \geq f_{m,k}(I_t)$. By the
  definition, however, we have that $f_{m,k}(I_{t+1}) = g_{m,k}(t+1)$ and $f_{m,k}(I_t)
  = g_{m,k}(t)$.  Hence, $g_{m, k}(t+1) \geq g_{m, k}(t)$.
\end{proof}

Without the assumption that $m \geq 2k$,
Proposition~\ref{pro:g-nondec} would have to be phrased more
cautiously, and would speak only of the existence of nondecreasing
counting function. (For example, for $m=k$, the function $g_{m,k}$ could be
arbitrary.)

\newcommand{\textproowa}{
Every top-$k$-counting rule is OWA-based.}


\begin{proposition}\label{pro:owa}
  \textproowa
\end{proposition}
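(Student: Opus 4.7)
The plan is to exhibit explicit OWA weights and a single-winner scoring function that, together, reproduce an arbitrary top-$k$-counting function. Since the OWA template is
\[
f_{m,k}(i_1,\ldots,i_k) = \sum_{t=1}^{k} \lambda^{t}_{m,k}\, \gamma_{m,k}(i_t),
\]
the natural candidate for $\gamma_{m,k}$ is the $k$-approval scoring function $\alpha_k$, because $\alpha_k(i_t)=1$ precisely when $i_t\le k$, which is exactly what the top-$k$-counting construction keeps track of. Fixing a sorted sequence $(i_1,\ldots,i_k) \in [m]_k$ and setting $x = |\{t \in [k] : i_t \le k\}|$, the monotonicity of the sequence implies that $\alpha_k(i_1)=\cdots=\alpha_k(i_x)=1$ and $\alpha_k(i_{x+1})=\cdots=\alpha_k(i_k)=0$. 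Thus the OWA expression collapses to $\sum_{t=1}^{x}\lambda^{t}_{m,k}$, and to match $g_{m,k}(x)$ for every $x\in\{0,\ldots,k\}$ it suffices to ask for a telescoping identity.

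Accordingly, I would define
\[
\lambda^{t}_{m,k} := g_{m,k}(t) - g_{m,k}(t-1), \qquad t \in [k],
\]
so that $\sum_{t=1}^{x}\lambda^{t}_{m,k} = g_{m,k}(x) - g_{m,k}(0) = g_{m,k}(x)$, using $g_{m,k}(0)=0$ from Definition~\ref{topkcounting}. To confirm that $\Lambda_{m,k}=(\lambda^{1}_{m,k},\ldots,\lambda^{k}_{m,k})$ is a legitimate OWA operator, I need its entries to be nonnegative reals. Nonnegativity is exactly the content of Proposition~\ref{pro:g-nondec}, which (under the standing assumption $m\ge 2k$) guarantees that $g_{m,k}$ is nondecreasing.

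Putting the pieces together: for every $(i_1,\ldots,i_k)\in[m]_k$,
\[
\sum_{t=1}^{k}\lambda^{t}_{m,k}\,\alpha_k(i_t) \;=\; \sum_{t=1}^{x}\lambda^{t}_{m,k} \;=\; g_{m,k}(x) \;=\; f_{m,k}(i_1,\ldots,i_k),
\]
so $f_{m,k}=f_{m,k}(\Lambda,\gamma)$ in the sense of Definition~\ref{def:owa-csr} with $\gamma_{m,k}=\alpha_k$. Doing this for every pair $(m,k)$ with $k\le m$ exhibits the whole family of committee scoring functions defining $\calR_f$ as an OWA-based rule. There is no real obstacle here; the only subtlety is making sure to invoke Proposition~\ref{pro:g-nondec} (and hence the $m\ge 2k$ hypothesis) to justify $\lambda^{t}_{m,k}\ge 0$, and to use the sortedness of $(i_1,\ldots,i_k)$ to argue that the first $x$ of the $\alpha_k$-values are exactly the ones that are $1$.
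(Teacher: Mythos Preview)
Your proposal is correct and essentially identical to the paper's proof: both take $\gamma_{m,k}=\alpha_k$ and $\lambda^t_{m,k}=g_{m,k}(t)-g_{m,k}(t-1)$, then use the telescoping sum together with $g_{m,k}(0)=0$ to recover $f_{m,k}$. The only difference is that you explicitly invoke Proposition~\ref{pro:g-nondec} to justify nonnegativity of the OWA weights, a point the paper's proof leaves implicit.
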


\begin{proof}
  Let us consider a top-$k$-counting rule $\calR_f$, where
  $f=(f_{m,k})_{k\le m}$ is the corresponding family of
  top-$k$-counting functions defined by a family of counting functions
  $(g_{m,k})_{k\leq m}$. Let us consider one function $f_{m,k}$ from
  this family. We know that $f_{m,k}\colon [m]_k\to\N$ is a
  top-$k$-counting scoring function defined through a counting
  function $g_{m,k}$ so that $f_{m,k}(i_1, \ldots, i_k) = g_{m,k}(s)$,
  where $s = |\{ t \in [k] \colon i_t \leq k\}|$. As $g_{m,k}(0)=0$,
  we have
\begin{align*}
f_{m,k}(i_1, \ldots i_k) &= g_{m,k}(s)-g_{m,k}(0)= \textstyle\sum_{t=1}^s (g_{m,k}(t)-g_{m,k}(t-1)) \\
                         &=\textstyle\sum_{t=1}^k \alpha_k(i_t)\cdot (g_{m,k}(t)-g_{m,k}(t-1)),
\end{align*}
from which we see that $\calR_f$ is OWA-based through the family of OWA operators:
\begin{align*}
\Lambda_{m,k} = (g_{m,k}(1)-g_{m,k}(0), g_{m,k}(2)-g_{m,k}(1), \ldots, g_{m,k}(k)-g_{m,k}(k-1)),
\end{align*}
and the family of $k$-Approval scoring functions ($\gamma_{m,k} = \alpha_k$).
\end{proof}

%
  %
  %
%

%
%

In the next theorem (and in many further theorems) we speak of a
committee scoring rule $\calR_f$ defined through a family of committee
scoring functions $f = (f_{m,k})_{2k\le m}$. We use this notation as a
shorthand for the assumption that the theorem is restricted to the
cases where $2k \leq m$.

\begin{theorem}\label{pro:fix-maj-top-k}
  Let $f=(f_{m,k})_{2k\le m}$ be a family of committee scoring functions. 
  If $\calR_f$ satisfies the fixed-majority criterion,  then $\calR_f$ is top-$k$-counting.
\end{theorem}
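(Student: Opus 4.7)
The plan is to show that $f_{m,k}(J)$ depends only on $s := |\{t : j_t \leq k\}|$; once this is established, defining $g_{m,k}(s) := f_{m,k}(K_s) - f_{m,k}(K_0)$ for the canonical sequence $K_s$ introduced below yields a counting function with $g_{m,k}(0) = 0$, and since a uniform additive shift to $f_{m,k}$ leaves every committee's score offset by the same amount (and hence does not change the rule), $\calR_f$ coincides with the top-$k$-counting rule so defined. For each $s \in \{0, \ldots, k\}$ I single out the representative
\begin{align*}
K_s := (k-s+1,\, \ldots,\, k,\, m-k+s+1,\, \ldots,\, m),
\end{align*}
which is $\succeq$-dominated by every other sequence in $[m]_k$ having $s$ entries in $[k]$. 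The easy inequality $f_{m,k}(J) \geq f_{m,k}(K_s)$ then comes for free from the monotonicity built into Definition~\ref{def:committee-scoring-function}, so the real content lies in establishing $f_{m,k}(J) \leq f_{m,k}(K_s)$.

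For the reverse inequality I will pit $J$ against $K_s$ inside a fixed-majority election. Fix $W := \{c_1, \ldots, c_k\}$ and $W_J := \{c_{j_1}, \ldots, c_{j_k}\}$. For each positive integer $n$, consider the election $E_n$ with $n+1$ voters casting $\sigma := c_1 \succ c_2 \succ \cdots \succ c_m$ and $n$ voters casting a handcrafted ranking $\tau$. The $\sigma$-voters form a strict majority and all rank $W$ above its non-members, so $W$ is fixed-majority and the axiom forces $\score(W) > \score(W_J)$ in $E_n$. I choose $\tau$ so that simultaneously $\pos_\tau(W_J) = (1, \ldots, k)$ and $\pos_\tau(W) = K_s$: list $W_J \setminus W$ at positions $1, \ldots, k-s$, then $W \cap W_J$ at positions $k-s+1, \ldots, k$, then the filler candidates $C \setminus (W \cup W_J)$ at positions $k+1, \ldots, m-k+s$, and finally $W \setminus W_J$ at positions $m-k+s+1, \ldots, m$. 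The filler block contains $m - 2k + s \geq 0$ slots, which is exactly where the standing hypothesis $m \geq 2k$ is spent.

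Unpacking the fixed-majority inequality $\score(W) > \score(W_J)$ gives
\begin{align*}
(n+1)\bigl(f_{m,k}(1,\ldots,k) - f_{m,k}(J)\bigr) > n\bigl(f_{m,k}(1,\ldots,k) - f_{m,k}(K_s)\bigr).
\end{align*}
If $f_{m,k}(K_s) < f_{m,k}(J)$, the $n$-coefficient on the right strictly exceeds the $n$-coefficient on the left, so the inequality must fail for all sufficiently large $n$, contradicting the axiom. Hence $f_{m,k}(J) \leq f_{m,k}(K_s)$, and combining with dominance gives $f_{m,k}(J) = f_{m,k}(K_s)$, so $f_{m,k}$ is determined by $s$ alone. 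The main obstacle is the combinatorial design of $\tau$: it must lift $W_J$ into the top $k$ positions while, within the same single ranking, pinning $W$'s members down to their least-dominant pattern $K_s$, and the room to weave $W \cap W_J$ through the top block while exiling $W \setminus W_J$ to the very tail is precisely what the hypothesis $m \geq 2k$ provides; once $\tau$ is in place, the argument reduces to linear-in-$n$ bookkeeping.
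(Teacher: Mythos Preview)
Your proof is correct and follows essentially the same approach as the paper: both construct a fixed-majority election with $n{+}1$ majority voters and $n$ minority voters arranged so that the score gap between the fixed-majority committee and a competitor is linear in $n$ with the wrong sign unless $f_{m,k}$ is constant on sequences with the same number of top-$k$ entries. The only cosmetic difference is that the paper compares the two extremal sequences $I_t$ and $J_t$ (your $K_t$) and then sandwiches an arbitrary $L$ between them, whereas you compare an arbitrary $J$ directly against $K_s$, which is slightly more streamlined and also handles the normalization $g_{m,k}(0)=0$ explicitly.
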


\begin{proof}
%
  Let us fix two numbers $m$ and $k$ such that $2k \leq m$.  Consider
  an election with $m$ candidates, where a committee of size $k$ is to
  be elected.  For each positive integer $t$ such that $0 \leq t \leq
  k$ we define the following two sequences from $[m]_k$:
  \begin{enumerate}

  \item $I_t = (1,\ldots,t, k+1, \ldots, k+k-t)$ is a sequence of
    positions of the candidates where the first $t$ candidates are
    ranked in the top $t$ positions and the remaining $k-t$ candidates
    are ranked just below the $k$th position.

  \item $J_t = (k-(t-1), \ldots, k, m-((k-t)-1), \ldots, m)$ is a
    sequence of positions where the first $t$ candidates are ranked
    just above (and including) the $k$th position, whereas the
    remaining $k-t$ candidates are ranked at the bottom.

  \end{enumerate}
  Among these, $I_k = (1, \ldots, k)$ is the highest-scoring sequence
  of positions and $J_k = (m-(k-1), \ldots, m)$ is the lowest-scoring
  sequence. Further, for every $t$ 
  we have $I_t \succeq J_t$ and, in effect, $f_{m,k}(I_t) \geq f_{m,k}(J_t)$.

  We claim that if there exists some $t \in \{0, \ldots, k\}$ such that
  $f_{m,k}(I_t) > f_{m,k}(J_t)$ then $\calR_f$ does not have the fixed-majority
  property. 
%
  For the sake of contradiction,
  assume that there is some $t$ such that
  $f_{m,k}(I_t) > f_{m,k}(J_t)$.
  Let $E = (C,V)$
  be an election with $m$~candidates and $2n + 1$~voters.
%
  The set of candidates is $C = X \cup Y \cup Z \cup D$, where 
  $X = \{x_1,     \ldots, x_t\}$,
  $Y = \{y_{t+1}, \ldots, y_k\}$,
  $Z = \{z_{t+1}, \ldots, z_k\}$,
  and $D$ is a set of sufficiently many dummy candidates so that
  $|C| = m$. 
  We focus on two committees, $M = X \cup Y$ and $N = X \cup Z$.
  The first $n + 1$ voters have preference order $X \pref Y \pref Z
  \pref D$, and the next $n$ voters have preference order $ Z
  \pref X \pref D \pref Y$.  Note that the fixed-majority criterion
  requires that $M$ be the unique winning committee.

  Committee $M$ receives the total score of $(n + 1) f_{m,k}(I_k) + n
  f_{m,k}(J_t)$, whereas committee $N$ receives the total score of $(n + 1)
  f_{m,k}(I_t) + n f_{m,k}(I_k)$.  The difference between these values is:
  \begin{align*}
    & (n + 1) f_{m,k}(I_k) + n f_{m,k}(J_t) - (n + 1) f_{m,k}(I_t) - n f_{m,k}(I_k) = \\
    & f_{m,k}(I_k) + n f_{m,k}(J_t) - (n + 1) f_{m,k}(I_t) = \\
    & f_{m,k}(I_k) - f_{m,k}(I_t) + n \big(f_{m,k}(J_t) - f_{m,k}(I_t)\big),
  \end{align*}
  which, for a large enough value of $n$, is negative (since, by assumption,
  we know that $f_{m,k}(J_t) < f_{m,k}(I_t)$ and so $f_{m,k}(J_t) - f_{m,k}(I_t)$ is negative).
  That is, for large enough $n$, committee $M$ does not win the
  election and $\calR_f$ fails the fixed-majority criterion.

  So, if $\calR_f$ satisfies the fixed-majority criterion, then for
  every $t \in \{0, \ldots, k\}$ we have that $f_{m,k}(I_t) =
  f_{m,k}(J_t)$. This, however, means that $f_{m,k}$ is a
  top-$k$-counting scoring function.  To see this, consider some
  sequence of positions $L = (\ell_1, \ldots, \ell_k)\in [m]_k$ where
  exactly the first $t$ entries are smaller than or equal to
  $k$. Clearly, we have that $I_t \succeq L \succeq J_t$ and so
  $f_{m,k}(I_t) = f_{m,k}(L) = f_{m,k}(J_t)$, which means that
  $f_{m,k}(i_1, \ldots, i_k)$ depends only on the cardinality of the
  set $\{ t \in [k] \colon i_t \leq k\}$. Since $m$ and $k$ where
  chosen arbitrarily (with $2k \leq m$), this completes the proof.
\end{proof}

Unfortunately, the converse of Theorem~\ref{pro:fix-maj-top-k}
does not hold: $\alpha_k$-CC, for example, is a top-$k$-counting rule that fails the
fixed-majority criterion.


\begin{example}\label{example:cc}
  Consider an election $E = (C,V)$ with $C = \{a,b,c,d\}$, $V = (v_1, v_2, v_3)$, and $k = 2$. Let the preference orders of the voters
  be:
  \begin{align*}
    v_1 &\colon \fourvote abcd,& 
    v_2 &\colon \fourvote abcd, & 
    v_3 &\colon \fourvote cdab.
  \end{align*}
  The fixed-majority criterion requires $\{a,b\}$ to be the only
  winning committee, while under $\alpha_k$-CC, other committees,
  such as $\{a,c\}$, have strictly higher scores. (Incidentally, this
  example also witnesses that SNTV fails the fixed-majority criterion;
  the fact is hardly surprising since SNTV is not a top-$k$-counting rule.)
\end{example}

\subsection{Criterion for Fixed-Majority Consistency}

In this section, we provide a formal characterization of those top-$k$-counting
rules that satisfy the fixed-majority criterion.  Together with
Theorem~\ref{pro:fix-maj-top-k}, this gives a full
characterization of committee scoring rules with this property.



\newcommand{\textthmcharacterization}{
 Let $f=(f_{m,k})_{2k\le m}$ be a family of committee scoring functions with the corresponding family $(g_{m, k})_{2k \leq m}$ of counting functions.   
 Then, 
  $\calR_f$ satisfies the fixed-majority criterion if and only if for every $k, m\in \N$, $2k \leq m$:
\begin{enumerate}
\item[(i)] $g_{m, k}$ is not constant, and 
\item[(ii)] for each pair of nonnegative integers
  $k_1,k_2$ with $k_1+k_2 \leq k$, it holds that:
  \begin{align*}
  g_{m,k}(k) - g_{m,k}(k-k_2) \geq g_{m,k}(k_1+k_2) - g_{m,k}(k_1).
  \end{align*}
  \end{enumerate}
}

\begin{theorem}\label{thm:characterization}
\textthmcharacterization
\end{theorem}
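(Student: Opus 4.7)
The plan is to prove the two implications separately, working throughout with a fixed pair $m, k$ satisfying $m \ge 2k$ and reasoning about a single counting function $g_{m,k}$.

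\textbf{Necessity.} For (i) I would observe that a constant $g_{m,k}$ assigns every size-$k$ committee the same total score, so no election admits a unique winner and the fixed-majority criterion fails as soon as its hypothesis can be satisfied. For (ii), fix $k_1, k_2$ with $k_1 + k_2 \le k$ and build the following election. Partition $C$ into four blocks: $W \cap W'$ of size $k - k_2$, $W \setminus W'$ of size $k_2$, $W' \setminus W$ of size $k_2$, and $m - k - k_2$ ``dummy'' candidates (this block is nonempty because $m \ge 2k \ge k + k_2$). Place $n+1$ voters whose top $k$ positions are exactly $W$ (giving $W$ a fixed majority) and $n$ voters whose top $k$ consists of $k_1$ members of $W \cap W'$, $k_2$ members of $W' \setminus W$, and $k - k_1 - k_2$ dummies. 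A direct count (using the fact that the majority voters' top $k$ meets $W'$ in $W \cap W'$, of size $k - k_2$, while the minority voters' top $k$ contains exactly $k_1$ members of $W$ and $k_1 + k_2$ members of $W'$) gives
\[
\score(W) - \score(W') = (n+1)\bigl[g_{m,k}(k) - g_{m,k}(k-k_2)\bigr] - n\bigl[g_{m,k}(k_1+k_2) - g_{m,k}(k_1)\bigr].
\]
If (ii) were violated, the second bracket strictly exceeds the first, so for sufficiently large $n$ the committee $W'$ beats $W$, contradicting fixed-majority.

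\textbf{Sufficiency.} Assume (i) and (ii). I first derive the auxiliary fact $g_{m,k}(k) > g_{m,k}(j)$ for every $j < k$. Applying (ii) with $k_2 = 1$ and arbitrary $k_1$ shows that $g_{m,k}(k) - g_{m,k}(k-1)$ dominates every single-step difference of $g_{m,k}$; combined with (i) and monotonicity (Proposition~\ref{pro:g-nondec}), at least one such step is strictly positive, so $g_{m,k}(k) > g_{m,k}(k-1) \ge g_{m,k}(j)$. Now let $E$ be any election with a fixed-majority committee $W$, and let $W' \ne W$ with $k_1' := |W \cap W'| < k$. Each majority voter contributes $g_{m,k}(k) - g_{m,k}(k_1')$ to $\score(W) - \score(W')$, since their top $k$ is exactly $W$. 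For an arbitrary remaining voter $v$, set $x_v = |W \cap \pos_v^{-1}([k])|$ and $y_v = |W' \cap \pos_v^{-1}([k])|$, so their contribution is $g_{m,k}(x_v) - g_{m,k}(y_v)$. If $y_v \le x_v$ this is nonnegative. Otherwise, noting that $y_v - x_v \le |W' \setminus W| = k - k_1'$, apply (ii) with $k_1 := x_v$ and $k_2 := y_v - x_v$ to get
\[
g_{m,k}(y_v) - g_{m,k}(x_v) \le g_{m,k}(k) - g_{m,k}\bigl(k - (y_v - x_v)\bigr) \le g_{m,k}(k) - g_{m,k}(k_1'),
\]
where the second inequality uses monotonicity and $k - (y_v - x_v) \ge k_1'$. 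Hence each minority voter contributes at least $-[g_{m,k}(k) - g_{m,k}(k_1')]$. Since the majority strictly outnumbers the minority and $g_{m,k}(k) > g_{m,k}(k_1')$, summing over all voters yields $\score(W) - \score(W') > 0$, so $W$ is the unique winner.

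\textbf{Main obstacle.} The delicate step is choosing the right attack for necessity: the naive attack (minority ranks $W'$ entirely in its top $k$) does not recover (ii), because the form of (ii) is asymmetric in how $k_1$ and $k_2$ enter. The correct construction forces the overlap $|W \cap W'|$ to be $k - k_2$ (not some unrelated parameter), and the minority's top $k$ to contain exactly $k_1$ shared candidates, $k_2$ candidates from $W' \setminus W$, and $k - k_1 - k_2$ dummies. Verifying that this composition is realizable uses the slack $m \ge 2k$, which is precisely the standing assumption of the theorem.
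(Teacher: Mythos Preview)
Your proof is correct. The necessity direction matches the paper's approach almost exactly: both construct an election with $n+1$ majority voters and $n$ minority voters so that the score difference between the fixed-majority committee and a tailored competitor has leading coefficient $[g_{m,k}(k)-g_{m,k}(k-k_2)]-[g_{m,k}(k_1+k_2)-g_{m,k}(k_1)]$, which is negative exactly when (ii) fails. (Your choice of $|W\cap W'|=k-k_2$ differs cosmetically from the paper's $|M\cap S|=k_1+k_2$, but the leading coefficients coincide.)

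The sufficiency direction, however, takes a genuinely different route. The paper argues by contradiction and first performs two ``without loss of generality'' reductions: it replaces all non-majority voters by copies of a single worst-case preference order, and then further adjusts the competing committee $S$ so that $|S\setminus M|$ equals the number of $S\setminus M$ candidates the minority ranks in its top $k$. Only after these reductions does it invoke (ii) once. You instead bound each minority voter's contribution individually: from $y_v-x_v\le |W'\setminus W|=k-k_1'$ and monotonicity you extract $g_{m,k}(y_v)-g_{m,k}(x_v)\le g_{m,k}(k)-g_{m,k}(k_1')$ directly from (ii). This is cleaner, since it avoids justifying the WLOG steps (the paper's replacement argument is somewhat informal), and it makes transparent that what (ii) really buys is a uniform per-voter bound on how much any minority voter can favor $W'$ over $W$. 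The paper's reduction, on the other hand, makes visible which single configuration of minority voter and competitor is the extremal one.
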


Condition (ii) in Theorem~\ref{thm:characterization} is a relaxation of the convexity
property for function $g_{m,k}$ and is illustrated in
Figure~\ref{fig:characterization}. We discuss this in more detail
after the proof of the theorem.

\begin{figure}
  \begin{center}
    \scalebox{0.85}{ 
  \includegraphics{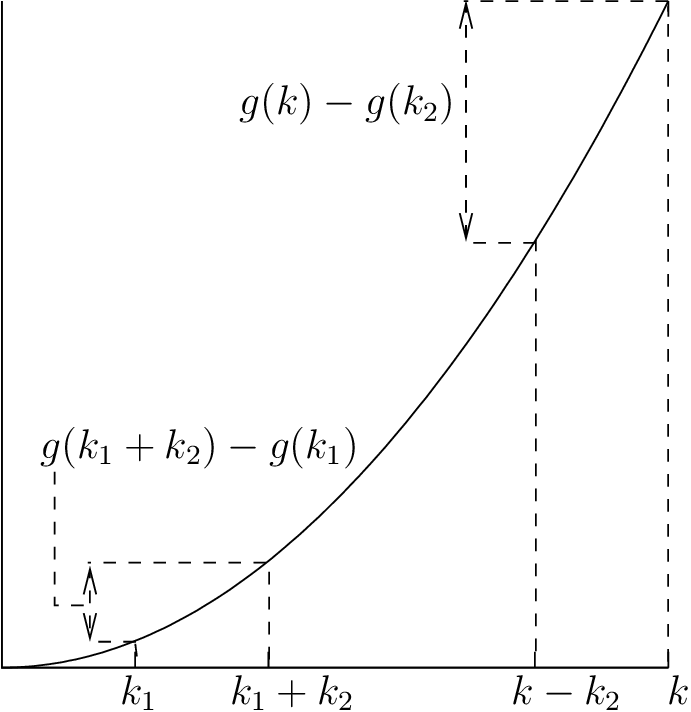}
  }
    \caption{\label{fig:characterization}Illustration of 
      the condition from
      Theorem~\ref{thm:characterization}.}
  \end{center}
\end{figure}

\begin{proof}[Proof of Theorem~\ref{thm:characterization}]
Let $f_{m,k}$ be one of the committee scoring functions and $g_{m,k}$ be its corresponding counting function.  By
  Proposition~\ref{pro:g-nondec}, $g_{m,k}$ is nondecreasing so the fact
  that it is non-constant is equivalent to $g_{m,k}(k)>g_{m,k}(0)$.  Moreover, we
  note that conditions (i) and (ii) imply that for each $k'$ with $0 \leq
  k' \leq k-1$, we have $g_{m,k}(k) > g_{m,k}(k')$.  To see this we take $k_2 =
  1$ and note that for each $k_1$ it holds that $g_{m,k}(k) - g_{m,k}(k - 1) \geq
  g_{m,k}(k_1 + 1) - g_{m,k}(k_1)$.  As $g_{m,k}(k)>g_{m,k}(0)$, for some $k_1$ we have that
  $g_{m,k}(k_1 + 1) - g_{m,k}(k_1) > 0$. Thus, we have that $g_{m,k}(k) - g_{m,k}(k-1) \geq
  g_{m,k}(k_1 + 1) - g_{m,k}(k_1) > 0$. Since $g_{m,k}$ is nondecreasing, it is also
  true that $g_{m,k}(k) > g_{m,k}(k')$.

  Let us now show that if for each $m$ and $k$, $g_{m,k}$ satisfies (ii) then $\calR_f$ has the
  fixed-majority property.
%
%
  Let $E = (C,V)$ be an election with $n$ voters and $m$ candidates,
  for which there is a size-$k$ committee $M$ such that a majority of
  the voters rank all members of $M$ in the top $k$ positions, but $M$
  loses to some committee $S \neq M$ (also of size~$k$).  That is, we
  have $\score(S) \geq \score(M)$.  Let $\xi$ be a rational number,
  $\frac{1}{2} < \xi \leq 1$, such that exactly $\xi n$ voters
  rank all the members of $M$ in the top $k$ positions; we will refer
  to these voters as $M$-voters and to the others as non-$M$-voters.

  Without loss of generality, we can assume that all the
  non-$M$-voters have identical preference orders. Indeed, if it were
  the case that $f_{m,k}(\pos_{v_i}(S)) - f_{m,k}(\pos_{v_i}(M)) >
  f_{m,k}(\pos_{v_j}(S)) - f_{m,k}(\pos_{v_j}(M))$ for some two non-$M$-voters
  $v_i$ and $v_j$, then we could replace the preference order of $v_j$
  with that of $v_i$ and increase the advantage of $S$ over $M$. If for
  all non-$M$-voters this difference were the same, then we could
  simply pick the preference order of one of them and assign it to all
  the other ones.

  Let $k_1$, $k_2$, $k_3$, and $k_4$ be four numbers such that:
  \begin{enumerate}
  
  \item $k_1$ is the number of candidates from $S \cap M$ that the non-$M$-voters
    rank among their top $k$ positions, 
  
  \item $k_2$ is the number of candidates from $S \setminus M$ that
    the non-$M$-voters rank among their top $k$ positions, and
  
  \item $k_3$ is the number of candidates from $C \setminus (S \cup
    M)$ that the non-$M$-voters rank among their top $k$ positions.
  
  \item $k_4$ is the number of candidates from $M \setminus S$ that
    the non-$M$-voters rank among their top $k$ positions.

  \end{enumerate}
  Without loss of generality, we can assume that $k_4 = 0$ and that
  $|S \setminus M| = k_2$ (since $m \geq 2k$, we can replace all
  members of $M \setminus S$ with candidates from $C \setminus M$,
  and, similarly, we can ensure that all members of $S \setminus M$
  are ranked among top $k$ positions by non-$M$-voters; these changes
  never decrease the score of $S$ relative to that of $M$).  In
  effect, we have that $k_1 + k_2 + k_3 = k$ and, since $|S\cap
  M|+|S\setminus M|=k$, we have that $|S\cap M|=k-k_2$.  We can assume
  that $k_2 > 0$ as otherwise we would have $S = M$. Given this
  notation, the difference between the scores of $M$ and $S$ is:
  %
  %
  \begin{align*}
    &\score(M) - \score(S) =\\
    &\xi n \cdot  g_{m,k}(k) + (1-\xi)n \cdot g_{m,k}(k_1)
             - \xi n \cdot g_{m,k}(k - k_2) -(1-\xi)n \cdot g_{m,k}(k_1+k_2) = \\
             & \xi n \cdot \big(g_{m,k}(k) - g_{m,k}(k-k_2)\big)
             - (1-\xi)n \cdot \big( g_{m,k}(k_1+k_2) - g_{m,k}(k_1) \big) > 0,
  \end{align*}
  where the second equality holds due to rearranging of terms, and the final
  inequality is an immediate consequence of the assumptions regarding
  the value of $\xi$ and the properties of $g_{m,k}$ (namely, that $g_{m,k}(k) -
  g_{m,k}(k-k_2) \geq g_{m,k}(k_1+k_2) - g_{m,k}(k_1)$ and that $g_{m,k}(k) - g_{m,k}(k-k_2) > 0$).
  This, however, contradicts the assumption that $\score(S) \geq
  \score(M)$ and, so, $\calR_f$ satisfies the fixed-majority
  criterion.\par\smallskip

  We now consider the other direction.  For the sake of contradiction,
  let us assume that $\calR_f$ satisfies the fixed-majority criterion
  but that there exist $m$ and $k$ such that either condition (i) or condition (ii) is not satisfied.  If $g_{m,k}$ is a
  constant function then $\calR_f$ fails the fixed-majority criterion
  because it always outputs all the subsets of size~$k$, independently of voters' preferences.  Thus we
  assume that $g_{m,k}$ is not constant. Suppose (ii) does not hold and
  there exist $k_1$ and $k_2$ with $k_1+k_2\le k$ such that $g_{m,k}(k) -
  g_{m,k}(k-k_2) < g_{m,k}(k_1+k_2)-g_{m,k}(k_1)$. We form an election with $m$
  candidates, $c_1, \ldots, c_m$, and $2n+1$ voters (we describe the
  choice of $n$ later). The first $n+1$ voters have preference order:
   \[
     c_1 \pref c_2 \pref \cdots \pref c_m,
   \]
   and the remaining $n$ voters have preference order:
   \[
      c_1 \pref \cdots \pref c_{k_1} \pref c_m \pref c_{m-1} \pref 
\cdots \pref c_{k_1+1}.
   \]
   Since $\calR_f$ satisfies the fixed-majority criterion, in this
   election it outputs the unique winning committee $M = \{c_1,
   \ldots, c_k\}$. However, consider committee $S$:
   \begin{align*}
   S = \{c_1, \ldots,
   c_{k_1+k_2}, c_{m}, \ldots, c_{m-(k-k_1-k_2)+1}\}.
   \end{align*}
   Since $m \ge 2k$, the difference between the scores of $M$ and $S$ is:
   \begin{align*}
     & \score(M) - \score(S) = \\
     &(n+1) g_{m,k}(k) + n g_{m,k}(k_1) 
               - (n+1) g_{m,k}(k_1+k_2) - n g_{m,k}(k-k_2) = \\
     & n\big( g_{m,k}(k) - g_{m,k}(k-k_2)\big)  + g_{m,k}(k) 
      - n\big( g_{m,k}(k_1+k_2) - g_{m,k}(k_1) \big) -g_{m,k}(k_1+k_2).
   \end{align*}
   Since $g_{m,k}(k) - g_{m,k}(k-k_2) < g_{m,k}(k_1+k_2)-g_{m,k}(k_1)$, we observe that for
   large enough $n$ the difference $\score(M) - \score(S)$ becomes
   negative.  This is a contradiction showing that (ii) holds.
\end{proof}

Let us take a step back and consider what condition (ii) from
Theorem~\ref{thm:characterization} means (recall
Figure~\ref{fig:characterization}).  Intuitively, it resembles the
convexity condition, but `focused' on $g_{m,k}(k)$.

\begin{definition}
Let $g_{m,k}$ be a counting function for some top-$k$-counting 
function $f_{m,k}\colon [m]_k\to\N$. We say that $g_{m,k}$ is {\em convex} if for each $k'$ such that $2 \leq k'
\leq k$, it holds that:
\begin{align*}
g_{m,k}(k') - g_{m,k}(k'-1) \geq g_{m,k}(k'-1) - g_{m,k}(k'-2).
\end{align*}
On the other hand, we say that $g$ is {\em concave} if for each $k'$
with $2 \leq k' \leq k$ it holds that:
\begin{align*}
g_{m,k}(k') - g_{m,k}(k'-1) \leq g_{m,k}(k'-1) - g_{m,k}(k'-2).
\end{align*}
\end{definition}

The notions of convexity and concavity are standard, but allow us
to express many features of top-$k$-counting rules in a very intuitive
way. For example, the following corollary is an immediate consequence
of Theorem~\ref{thm:characterization}.

\begin{corollary}\label{cor:characterization}
  Let $f=(f_{m,k})_{2k\le m}$ be a family of top-$k$-counting
  committee scoring functions with the corresponding family
  $(g_{m,k})_{2k\le m}$.  of counting functions.  The following hold:
  \begin{enumerate}
\item[(1)] if $g_{m,k}$ are convex, then $\calR_f$ satisfies the
  fixed-majority criterion, 
  and
\item[(2)] if $g_{m,k}$ are concave but not linear (that
  is, $\calR_f$ is not Bloc) then $\calR_f$ fails the fixed-majority
  criterion.
  \end{enumerate}
\end{corollary}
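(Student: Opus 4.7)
The plan is to derive both parts directly from Theorem~\ref{thm:characterization} by reinterpreting condition (ii) as a comparison between two telescoping sums of first-differences of $g_{m,k}$. Write $\Delta_j := g_{m,k}(j) - g_{m,k}(j-1)$ for $j \in [k]$. Then condition (ii) reads
\begin{align*}
\sum_{j = k - k_2 + 1}^{k} \Delta_j \ \geq\ \sum_{j = k_1 + 1}^{k_1 + k_2} \Delta_j,
\end{align*}
i.e.\ both sides are sums of $k_2$ consecutive $\Delta_j$'s, and the hypothesis $k_1 + k_2 \leq k$ ensures that the index range on the left sits termwise at or above the index range on the right.

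For part~(1), convexity of $g_{m,k}$ says $\Delta_1 \leq \Delta_2 \leq \cdots \leq \Delta_k$. Pairing the $\ell$-th term of the right-hand sum with the $\ell$-th term of the left-hand sum yields $\Delta_{k_1 + \ell} \leq \Delta_{k - k_2 + \ell}$ for each $\ell \in [k_2]$, and summing over $\ell$ gives (ii). Condition (i) (non-constancy) must be assumed implicitly---the degenerate $g_{m,k} \equiv 0$ case is convex but trivially fails fixed-majority---and for any non-degenerate convex $g_{m,k}$ it follows from the fact that non-negative, non-decreasing differences whose total is positive force $g_{m,k}(k) > g_{m,k}(0)$. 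Theorem~\ref{thm:characterization} then yields the fixed-majority property.

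For part~(2), concavity means $\Delta_1 \geq \Delta_2 \geq \cdots \geq \Delta_k$, and non-linearity produces some $j^* \in \{1, \ldots, k-1\}$ with $\Delta_{j^*} > \Delta_{j^* + 1}$. I take $k_1 := j^* - 1$ and $k_2 := 1$, which are admissible since $k_1 \geq 0$ and $k_1 + k_2 = j^* \leq k - 1 \leq k$. Then the left-hand side of (ii) is $\Delta_k$ while the right-hand side is $\Delta_{j^*}$, and by concavity together with the choice of $j^*$ we have $\Delta_k \leq \Delta_{j^* + 1} < \Delta_{j^*}$. So (ii) fails and the contrapositive of Theorem~\ref{thm:characterization} gives failure of the fixed-majority criterion. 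There is no real obstacle here: once condition (ii) is read as a comparison of equal-length sums of $\Delta_j$'s over two windows with the left window pushed to the right, both implications are immediate from monotonicity of $(\Delta_j)$; the only minor subtlety is the implicit non-triviality assumption needed to handle condition (i) in the convex case.
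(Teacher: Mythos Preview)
Your argument is correct and is precisely the natural way to make explicit what the paper leaves implicit: the paper offers no proof of this corollary beyond the remark that it is ``an immediate consequence of Theorem~\ref{thm:characterization}.'' Your telescoping reformulation of condition~(ii) as a comparison of two equal-length windows of first differences $\Delta_j$ is the obvious unpacking, and both the convex and concave cases then fall out exactly as you describe.

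Your flag on condition~(i) in part~(1) is well taken and is, in fact, a small imprecision in the corollary as stated rather than in your proof: the zero counting function $g_{m,k}\equiv 0$ is convex, yet the resulting rule ties every size-$k$ committee and so fails the fixed-majority criterion. The paper tacitly excludes this degenerate case throughout (for instance, in the surrounding discussion it treats Bloc and Perfectionist as the paradigmatic convex examples), and once non-constancy is assumed, your argument goes through cleanly.
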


The counting function for the Bloc rule is linear (and, thus, both convex and
concave), and the counting function for the Perfectionist rule is convex, so
these two rules satisfy the fixed-majority criterion. On the other
hand, the counting function for $\alpha_k$-CC is concave and, so, this
rule fails the criterion (as we observed in Example~\ref{example:cc}). 

By Proposition~\ref{pro:owa}, a family of concave counting functions
$g_{m,k}$ corresponds to a nonincreasing OWA operator, and a family of
convex counting functions corresponds to a nondecreasing one.  Skowron
et al.~\shortcite{sko-fal-lan:c:collective} provided evidence that
rules based on nonincreasing OWA operators are computationally easier
than those based on general OWA operators (though, still their winners
tend to be $\np$-hard to compute). In the next section we show that
this seems to be the case for top-$k$-counting rules as well, but we
also provide a striking example highlighting certain dissimilarity.

\section{Complexity of Top-$\boldsymbol{k}$-Counting Rules}\label{sec:ctkcr}

In this section, we consider the computational complexity of
winner determination for top-$k$-counting rules which are based on either convex or
concave counting functions. We start by considering several examples.

It is well-known that Bloc winners can be computed in polynomial
time. The same holds for the Perfectionist rule.

\begin{proposition}\label{pro:perfectionist}
  Both the Bloc and Perfectionist winners are computable in
  polynomial time.
\end{proposition}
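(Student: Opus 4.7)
The plan is to handle the two rules separately, in each case exploiting the very special shape of the corresponding counting function to reduce winner determination to an elementary counting problem.

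For the Bloc rule, the counting function is linear, so $f^\bloc_{m,k}(i_1,\ldots,i_k)=\sum_{t=1}^k \alpha_k(i_t)$. This means Bloc is weakly separable, and by the general observation recorded after Definition~\ref{def:committee-scoring-function} (and discussed in the paragraph on separability) one can simply compute, for each candidate $c\in C$, the single-winner $k$-Approval score $\score(c)=|\{v\in V:\pos_v(c)\le k\}|$ and output the $k$ candidates of highest score (breaking ties by taking all tied subsets). Computing all $k$-Approval scores takes $O(nm)$ time, and selecting the top $k$ takes $O(m\log m)$ time; both are polynomial in the input.

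For the Perfectionist rule, I would use the fact that $f^{\perf}_{m,k}(i_1,\ldots,i_k)=1$ iff $(i_1,\ldots,i_k)=(1,\ldots,k)$ and $0$ otherwise. Consequently, from a voter $v$ the only committee that receives a positive contribution is the unique size-$k$ set $T(v)\subseteq C$ consisting of $v$'s top $k$ candidates. Therefore $\score(S)$ equals the number of voters $v$ with $T(v)=S$, and in particular every committee $S$ that is not the top-$k$ of some voter receives score $0$. The algorithm is then: for each voter $v$, extract $T(v)$ (in $O(m)$ time); maintain a dictionary mapping each distinct set $T(v)$ to the number of voters attaining it (at most $n$ distinct sets, each handled in $O(k\log n)$ time with any reasonable data structure); and return the sets achieving the maximum count. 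If this maximum is $0$ — i.e.\ no voter's top-$k$ repeats as we iterate, and there is nothing to compare to — then all size-$k$ subsets of $C$ tie as winners, and in that degenerate case we simply report that fact (which is what the rule prescribes). The whole procedure runs in time polynomial in $n$ and $m$.

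There is no real obstacle here: the proof is essentially by inspection of the scoring functions. The only mildly delicate point is the case in which the Perfectionist rule gives every committee the same score $0$; the algorithm must acknowledge this as a universal tie rather than attempt to enumerate all $\binom{m}{k}$ committees, but this can be reported implicitly and does not affect polynomial runtime.
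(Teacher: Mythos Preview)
Your proof is correct and follows essentially the same approach as the paper: for Bloc, both use weak separability to reduce to picking the $k$ candidates with highest $k$-Approval score; for Perfectionist, both rely on the observation that only committees equal to some voter's top-$k$ set can have nonzero score, so it suffices to examine at most $n$ candidate committees.

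One small remark: your discussion of the ``maximum is $0$'' case is slightly confused. Since each $T(v)$ is counted at least once (namely by $v$ itself), the maximum count in your dictionary is always at least $1$ whenever $V$ is nonempty; the situation ``no voter's top-$k$ repeats'' yields maximum $1$, not $0$, and in that case the $n$ distinct sets $T(v)$ are precisely the tied winners (the remaining committees have score $0$ and lose). The only way the dictionary is empty is $V=\emptyset$. This does not affect the correctness of your algorithm, but the sentence as written should be cleaned up.
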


\begin{proof}
  The case of Bloc is well-known
 (as we mentioned, the Bloc rule is weakly separable and to form a winning committee of
  size $k$ it suffices to pick $k$ candidates with the highest $k$-Approval scores).
  To find a size-$k$ winning committee under the Perfectionist rule,
  for each voter $v$ we consider the set of her top-$k$ candidates as
  a committee, and compute the score of that committee in the
  election.  We output those committees---among the considered
  ones---that have the highest score.  Correctness follows by noting
  that the committees that the algorithm considers are the only ones
  with nonzero scores.
\end{proof}

While the result for the Perfectionist rule is very simple, it stands
in sharp contrast to the results of Skowron, Faliszewski and
Lang~\shortcite{sko-fal-lan:c:collective}.  By
Proposition~\ref{pro:owa}, Perfectionist is defined through the OWA
operator $(0, \ldots, 0,1)$, and Skowron et al. have shown that, in
general, rules defined through this operator are $\np$-hard to compute
and very difficult to approximate. However, their result relies on the
fact that voters can approve any number of candidates, while in our
case they have to approve exactly $k$ of them. This shows very clearly
that even though top-$k$-counting rules are OWA-based, we cannot
simply carry over the hardness results of Skowron et
al.~\shortcite{sko-fal-lan:c:collective} or Aziz et
al.~\shortcite{azi-gas-gud-mac-mat-wal:c:approval-multiwinner} to our
framework.

\newcommand{\singularity}[1]{\mathrm{sing}(#1)}

We can generalize Proposition~\ref{pro:perfectionist} to rules that
are, in some sense, similar to Perfectionist.  To this end, and to
facilitate our later discussion regarding the complexity of
top-$k$-counting rules, we define the following property of counting functions.


\begin{definition}
  Let $g_{m,k}$ be a counting function for a top-$k$-counting function
  $f_{m,k}\colon [m]_k\to\N$.  We define the singularity of $g_{m,k}$,
  denoted $\singularity{g_{m,k}}$, to be
$$
  \singularity{g_{m,k}} = \argmin_{2 \leq i \leq k} \big( g_{m,k}(i) - g_{m,k}(i - 1) \neq g_{m,k}(i - 1) - g_{m,k}(i - 2) \big).
$$
\end{definition}

Loosely speaking, $\singularity{g_{m,k}}$ is the smallest integer in $\{2,
\ldots, k\}$ for which the differential of $g_{m,k}$ changes. For Bloc
 (which is an exception) we define $\singularity{g_{m,k}}$ to be
$\infty$, since the differential is a constant function. Naturally,
for all other non-constant rules, the singularity is finite. For
example, for Perfectionist we have $\singularity{g_{m,k}} = k$.

We generalize the polynomial-time algorithm for Perfectionist 
to similar rules, for which the value $\singularity{g_{m,k}}$ is close to
$k$.  

%


  
  
\newcommand{\textpronearlyperfectionist} {Let $f=(f_{m,k})_{2k\le m}$
  be a top-$k$-counting family of (polynomial-time computable)
  committee scoring functions with the corresponding family of
  counting functions $(g_{m,k})_{2k\le m}$, and $\calR_f$ be the
  corresponding top-$k$-counting rule. Let $q$ be a constant, positive
  integer such that $k - \singularity{g_{m,k}} \leq q$ holds for all
  $m$ and $k$. Then $\calR_f$ has a polynomial-time computable winner
  determination problem.}


\begin{proposition}\label{pro:nearly-perfectionist}
\textpronearlyperfectionist
\end{proposition}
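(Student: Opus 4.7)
The plan is to mimic the enumeration strategy behind Proposition~\ref{pro:perfectionist}. First, a degenerate case: if $\singularity{g_{m,k}} = \infty$ for all $m,k$, each $g_{m,k}$ is linear, so $\calR_f$ coincides with the Bloc rule and is polynomial by Proposition~\ref{pro:perfectionist}. So fix $m,k$ with $m \ge 2k$ and assume $s := \singularity{g_{m,k}} \le k$; the hypothesis yields $s \ge k - q$.

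By the definition of singularity, $g_{m,k}$ has constant differential $c := g_{m,k}(1)$ on $\{1, \ldots, s-1\}$, so $g_{m,k}(i) = c \cdot i$ for $i \in \{0, \ldots, s-1\}$. Setting $h(i) := 0$ for $i < s$, $h(i) := g_{m,k}(i) - c \cdot i$ for $i \ge s$, $T_v := \{c \in C : \pos_v(c) \le k\}$, and $x_v(S) := |T_v \cap S|$, one can rewrite
\[
\score(S) = c \cdot \bloc(S) + \sum_{v \in V} h(x_v(S)).
\]
The first summand is the Bloc score (a separable quantity, polynomially optimizable). The second is supported on the \emph{high voters} for $S$, namely those $v$ with $x_v(S) \ge s$; any such voter satisfies $|T_v \cap S| \ge s \ge k - q$, so $S$ differs from $T_v$ in at most $q$ candidates on each side.

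I then enumerate a candidate set $\mathcal{E}$: for each voter $v^*$, each $A \subseteq T_{v^*}$ with $|A| \in \{s, \ldots, k\}$, and each $B \subseteq C \setminus T_{v^*}$ with $|B| = k - |A|$, add the committee $A \cup B$ to $\mathcal{E}$. Since $|A| \ge k - q$, $|B| \le q$, and $q$ is a constant, $|\mathcal{E}| = O(n \cdot k^q \cdot m^q)$, polynomial in the input. Additionally I include a Bloc-optimal committee together with all committees within $q+1$ swap moves of it (still polynomially many). The algorithm evaluates $\score$ on each committee in $\mathcal{E}$ and returns the best.

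Correctness: if the optimum $S^*$ has at least one high voter $v^*$, then taking $A = T_{v^*} \cap S^*$ and $B = S^* \setminus T_{v^*}$ places $S^* \in \mathcal{E}$, so we are done. The main obstacle is the case when $S^*$ has no high voter, so $\score(S^*) = c \cdot \bloc(S^*)$; here one uses the bound $h(i) \ge -c(q+1)$ on $\{s, \ldots, k\}$ (which follows from the monotonicity of $g_{m,k}$, since $g_{m,k}(i) \ge g_{m,k}(s-1) = c(s-1)$) to show that the bonus gap between any committee and a Bloc-optimal one is controlled, which forces $S^*$ to be within $O(q)$ swap moves of some Bloc-optimal committee and hence to lie in the augmented $\mathcal{E}$.
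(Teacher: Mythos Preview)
Your overall architecture mirrors the paper's: split on whether the optimum $S^*$ has a ``high voter'' (some $v$ with $x_v(S^*)\ge s$), enumerate polynomially many committees for the high-voter case, and fall back on Bloc for the other case. The paper handles case~2 more tersely, simply computing one Bloc-optimal committee and comparing; you go further by also enumerating a $(q{+}1)$-swap neighborhood of that Bloc optimum. Your decomposition $\score(S)=c\cdot\bloc(S)+\sum_v h(x_v(S))$ is a nice way to make the linear part explicit.

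The genuine gap is your treatment of case~2. You correctly derive the per-voter bound $h(i)\ge -c(q{+}1)$, but then assert that this ``forces $S^*$ to be within $O(q)$ swap moves of some Bloc-optimal committee.'' That implication is neither proved nor does it follow from the stated bound. The bound controls $h(x_v(S))$ voter by voter; what you would need is a bound on $\bloc(W_B)-\bloc(S^*)$ translating into small Hamming distance, and nothing in your argument connects these quantities. Abstractly, $S^*$ maximizes $\bloc$ subject to the constraint $x_v(\cdot)\le s-1$ for all $v$, and a constrained optimum of a separable objective can sit far from the unconstrained optimum. Even if the claim were true, there is a second hole: you enumerate the neighborhood of \emph{one} Bloc-optimal committee, whereas your claim only places $S^*$ near \emph{some} Bloc optimum, and Bloc ties can be abundant (many candidates can share the same $k$-Approval score), so the chosen $W_B$ need not be the relevant one.

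So the high-voter case is fine and matches the paper; the no-high-voter case needs a real argument rather than the hand-wave in your last paragraph.
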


\begin{proof}
  Let the input consist of election $E = (C,V)$ and positive integer
  $k$, and let $W$ be a winning committee in $\calR(E,k)$. We assume
  that $q < \frac{k}{2}$ (if it were not the case, then $k\le 2q$
  would be small and we could solve the problem using brute-force). We
  consider two cases: (1) there is at least one voter that has at
  least $\singularity{g_{m,k}}$ of his or her top $k$ candidates in
  $W$; (2) every voter has less than $\singularity{g_{m,k}}$ of his or
  her top $k$ candidates in $W$.

  If case (1) holds, then we can compute $W$ (or some other winning
  committee) by checking, for each voter $v$, all the committees that
  consist of at least $\singularity{g_{m,k}}$ candidates that $v$ ranks
  among his or her top $k$ positions. Since $k - \singularity{g_{m,k}} \leq q$, the
  number of committees that we have to check for each voter is:
  \[
  \sum_{t = \singularity{g_{m,k}}}^k {k \choose t} {m \choose k - t} \leq (q+1)
  \cdot {k \choose k - \singularity{g_{m,k}}} {m \choose k -
    \singularity{g_{m,k}}},
  \]
  which is a polynomial in $k$ and $m$. The above inequality requires
  some care: We have that $\singularity{g_{m,k}} > \frac{k}{2}$
  (because $k - \singularity{g_{m,k}} \leq q < \frac{k}{2}$) and, in
  effect, we have that for each $t \in \{\singularity{g_{m,k}},
  \ldots, k\}$ it holds that ${k \choose t} = {k \choose k - t} \leq
  {k \choose k - \singularity{g_{m,k}}}$ and ${m \choose k - t} \leq
  {m \choose k -\singularity{g_{m,k}}}$.

  If case (2) holds, then from the fact that $g_{m,k}(x) - g_{m,k}(x - 1)$ is a
  constant for $x \leq \singularity{g_{m,k}}$, we infer that $g_{m,k}(x)$ is
  effectively linear.
  Then, it suffices to compute the winning committee using the Bloc rule.
%
While we do not know which of the two cases holds, we can compute
the  two committees, one as in case (1) and one as in case (2), and
  output the one with the higher score
  (or either of them, in case of a tie).
  %
  %
%
\end{proof}

\begin{example}
  Consider the following committee scoring function:
  \[
     f_{m,k}'(i_1, \ldots, i_k) = f_\bloc(i_1, \ldots, i_k) + f_\perf(i_1, \ldots, i_k) = \alpha_k(i_1) + \cdots + \alpha_k(i_{k-1}) + 2\alpha_k(i_k).
  \]
  As a simple application of Proposition~\ref{pro:nearly-perfectionist},
  we get that the committee scoring rule $\calR_{f'}$ defined through $f'$ is polynomial-time computable.
  This rule can be seen as a variant of
  Bloc, where a voter gives additional one bonus point to a committee if
  he or she approves of all its members. By
  Corollary~\ref{cor:characterization}, this rule is fixed-majority
  consistent.

  It is also interesting to consider the rule which is defined through
  the following committee scoring function:
  \[ 
     f_{m,k}''(i_1, \ldots, i_k) = f_\sntv(i_1, \ldots, i_k) + f_\perf(i_1, \ldots, i_k) = \alpha_1(i_1) + \alpha_k(i_k).
  \]
  The corresponding rule is also polynomial-time computable (it
  suffices to compute an SNTV winning committee, and compare it with such committees whose all members stand on first $k$ positions in some voter's preference ranking), but it is not a top-$k$-counting
  rule and, so, it fails the fixed-majority criterion.
\end{example}

Yet, as one might expect, not all top-$k$-counting rules are
polynomial-time solvable and, indeed, most of them are not (under
standard complexity-theoretic assumptions). For example, $\alpha_k$-CC
is $\np$-hard (this follows quite easily from Theorem~1 of Procaccia
et al.~\shortcite{pro-ros-zoh:j:proportional-representation}; we
include a brief proof to substantiate the discussion and give the
reader some intuition).

\newcommand{\textprocccomplexity}{For
  $\alpha_k$-CC it  is $\np$-hard to decide whether or not there exists a committee with at least a given score
  (recall that $k$ in $\alpha_k$-CC is the committee size and, thus,  is part of the input).
}

\begin{proposition}\label{pro:cc-complexity}
  \textprocccomplexity
\end{proposition}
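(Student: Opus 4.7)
The plan is to give a polynomial-time reduction from a known $\np$-hard covering problem, following the blueprint of Procaccia, Rosenschein, and Zohar. I would reduce from Set Cover (equivalently, from Dominating Set or X3C): given a universe $U = \{u_1, \ldots, u_n\}$, a family of sets $\calS = \{S_1, \ldots, S_r\}$ with $\bigcup_j S_j = U$, and a bound $k$, we must decide whether there exist $k$ sets from $\calS$ whose union is $U$.

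From such an instance I would build an election $E=(C,V)$ as follows. Introduce one candidate $c_j$ for each set $S_j$, and introduce one voter $v_i$ for each element $u_i \in U$. For voter $v_i$, let $T_i = \{ c_j : u_i \in S_j \}$; place the candidates of $T_i$ in arbitrary order in the top positions of $v_i$'s preference ranking, followed by the remaining candidates in some fixed order. To ensure the committee size parameter matches the cover size and that each $T_i$ fits within the top-$k$ positions, I would add $k$ dummy candidates $d_1, \ldots, d_k$ and, if necessary, pad each $T_i$ with fresh filler candidates so that $|T_i| \le k$ and the dummies are always ranked below position $k$; alternatively, one can shrink instances so that $|T_i|\le k$ by the standard trick (or reduce from a restricted variant like X3C where $|S_j|=3$ and take $k$ large enough). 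The committee size in the multiwinner election is set to $k$.

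The key observation is that under $\alpha_k$-CC, a committee $W$ of size $k$ scores exactly one point from voter $v_i$ if at least one member of $W$ appears in the top $k$ positions of $v_i$, and zero otherwise. By construction, this happens if and only if $W \cap T_i \ne \emptyset$, i.e., if and only if at least one of the chosen sets contains $u_i$. Hence $\score(W) = |\{i : W \text{ covers } u_i\}|$, and so $\score(W) \ge n$ if and only if the sets indexed by $W$ form a cover of $U$. This gives the reduction: there is a committee of $\alpha_k$-CC score at least $n$ iff the Set Cover instance has a solution of size $k$.

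The only delicate point—and the main thing to be careful about—is arranging the preference orders so that each set $T_i$ really fits in the top-$k$ block and that padding does not inadvertently create additional top-$k$ coverage for a prospective committee. Using a restricted source problem (X3C, or Set Cover with $|S_j|\le k$, which is the case used by Procaccia et al.) sidesteps this entirely, since then $|T_i|$ is bounded by $k$ automatically and the fillers can be placed safely below position $k$. The remaining verification that the score equals the number of covered elements is then a direct counting argument, completing the $\np$-hardness proof.
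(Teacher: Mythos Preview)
Your approach is essentially the paper's: reduce from a covering problem by making candidates correspond to sets, voters to elements, and observing that the $\alpha_k$-CC score of a committee equals the number of covered elements. The paper uses X3C restricted to instances where every element lies in at most three sets, sets $k=n$ (where $|U|=3n$), and pads each voter's top-$k$ block with $n$ fresh dummy candidates, distinct for every voter.

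Two details in your writeup need correcting. First, the restriction you want is a bound on the element \emph{frequencies} $|T_i|$ (how many sets contain $u_i$), not on the set sizes $|S_j|$; your sentence ``Set Cover with $|S_j|\le k$'' does not by itself guarantee $|T_i|\le k$. Second, your claim that ``the fillers can be placed safely below position $k$'' is backwards: when $|T_i|<k$ you must pad the top-$k$ block itself with voter-specific fresh candidates, since otherwise other set-candidates would occupy those positions and a committee could score from $v_i$ without covering $u_i$. The paper handles this by giving each voter $n$ private dummies ranked immediately after $T_i$, so that positions $1,\ldots,k$ contain only $T_i$ together with dummies unique to that voter. Once you fix these two points, your argument coincides with the paper's.
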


\begin{proof}
  The $\np$-hardness follows easily from a standard reduction from the
  \textsc{Exact Cover by 3-Sets} problem, abbreviated as \textsc{X3C}.
  In an instance of \textsc{X3C} we are given a family of $m$ subsets,
  $S_1,\ldots,S_{m}$, each of cardinality 3, chosen from a given
  universal set $U = \{x_1,\ldots,x_{3n}\}$; we ask if there are $n$
  subsets from the family whose union is $U$. Additionally, we may assume that each
  element of $U$ belongs to at most three subsets since it is
  well-known that this variant of \textsc{X3C} remains $\np$-complete.

  Given an instance of \textsc{X3C}, we create a candidate for each
  subset, and a voter for each element of $U$. Voters rank the
  elements of the subsets to which they belong in their top positions,
  then they rank some $n$ dummy candidates (different ones for each
  voter), and then all the remaining candidates (in some arbitrary,
  easy to compute, order). We ask for a committee of size $k = n$.  There
  is a winning committee with score $3n$ if and only if the answer for
  the input instance is ``yes.''
\end{proof}



We generalize the above $\np$-hardness result to the case of convex
top-$k$-counting rules $\calR_f$ for which there is some constant $c$ such that
for each $k$ and $m$ it holds that $k - \singularity{g_{m, k}} \geq k / c$ (that is, to the case of convex
counting functions for which the differential changes `early').  An
analogous result for concave counting functions
follows from the works of Skowron, Faliszewski, and
Lang~\shortcite{sko-fal-lan:c:collective} and Aziz et
al.~\shortcite{azi-gas-gud-mac-mat-wal:c:approval-multiwinner}.

\newcommand{\textthmconvexhard}{Let $\calR_f$ be a top-$k$-counting rule
  defined through a family $f$ of top-$k$-counting functions
  $f_{m,k}\colon [m]_k\to \N$ with the corresponding family of counting functions
  $(g_{m,k})_{k \leq m}$ that do not depend on $m$, $g_{m,k} = g_k$, and such that: 
%
  \begin{enumerate}


  \item 
    For each $x$, $0 \leq x \leq k$, $g_{k}(x)$ is
    computable in polynomial time with respect to $k$ (that is, there is a polynomial time algorithm
    that given $x$ and $k$ outputs $g_{k}(x)$). 
    Moreover, for  each $k$, 
    $g_{k}(k)$ is polynomially bounded in $k$.

  \item 
    There is a constant $c$ such that, for each size of committee  $k$
    greater than some fixed constant $k_0$, $g_{k}$ is convex and
    $k - \singularity{g_{k}} \geq k / c$.

  \end{enumerate}
  Then, deciding if there is a committee with at least a given score
  is $\np$-hard for $\calR_f$.}

\begin{theorem}\label{thm:convex-hard}
  \textthmconvexhard
\end{theorem}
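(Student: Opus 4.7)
The plan is to reduce from the $\np$-hard problem of deciding whether a $D$-regular graph contains a $q$-clique. For any committee size $k$ I will use, let $K := \singularity{g_k}$. Convexity of $g_k$ together with its integer codomain forces $g_k(i) = i d_1$ on $i \leq K-1$ for some integer $d_1 \geq 0$, and $g_k(K) = (K-1)d_1 + d_2$ with $d_2 \geq d_1 + 1$; this positional jump is the feature of $g_k$ that the reduction will exploit.

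Given a $D$-regular graph $G = (V, E)$ and integer $q$, I would pick $k = 2cq$ (for $q$ larger than a small constant, else brute force) and compute $K = K(k)$ in polynomial time via the computability assumption on $g_k$. The candidate set consists of one \emph{vertex} candidate per $v \in V$, $K-2$ \emph{shared} candidates $s_1, \ldots, s_{K-2}$, $k - q - K + 2$ \emph{anchor} candidates $a_1, \ldots, a_{k-q-K+2}$, per-voter private dummies, and filler candidates to satisfy $m \geq 2k$. For each $e = \{u, v\} \in E$ I add one \emph{edge voter} whose top-$k$ positions are $u, v, s_1, \ldots, s_{K-2}$ followed by $k - K$ private dummies, with every remaining candidate placed past position $k$. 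I also add $N$ identical \emph{forcing voters} ranking $s_1 \succ \cdots \succ s_{K-2} \succ a_1 \succ \cdots \succ a_{k-q-K+2}$ on top followed by their private dummies, where $N$ is a polynomial chosen so that $N \cdot d_2 > |E| \cdot g_k(k)$; this is polynomial in the input size by the polynomial bound on $g_k(k)$.

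The equivalence has two steps. First, every score-maximizing committee must include all shared and anchor candidates: dropping any one of them decreases each of the $N$ forcing voters' counts from $k - q$ to $k - q - 1$, and since our choice of $k$ ensures $k - q \geq K$ the corresponding differential of $g_k$ is at least $d_2 \geq 1$, giving a total loss of at least $N d_2$ that exceeds the upper bound $|E| \cdot g_k(k)$ on any gain elsewhere. Hence any optimal committee is $\{s_1, \ldots, s_{K-2}, a_1, \ldots, a_{k-q-K+2}\} \cup V_C$ for some $V_C \subseteq V$ of size $q$. Second, for such a committee the edge voter for $e$ has count $K - 2 + |e \cap V_C| \in \{K-2, K-1, K\}$, and a direct expansion using the piecewise form of $g_k$ gives
\begin{align*}
 \sum_{e \in E} g_k(K - 2 + |e \cap V_C|) \;=\; d_1(K-2)|E| + d_1 \textstyle\sum_{v \in V_C}\deg_G(v) + (d_2 - d_1) |E(G[V_C])|.
\end{align*}
By $D$-regularity $\sum_{v \in V_C}\deg_G(v) = qD$ is independent of $V_C$, so the total committee score is a fixed affine function of $|E(G[V_C])|$, and a committee reaching an appropriately chosen threshold exists if and only if $|E(G[V_C])| \geq \binom{q}{2}$, i.e.\ iff $G$ contains a $q$-clique.

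The main obstacle is that $K = \singularity{g_k}$ is dictated by the rule and can grow with $k$; the shared-candidate device is precisely what lets the reduction absorb an arbitrary $K$ into a per-edge-voter offset while preserving the two meaningful bits of each edge. A secondary check is that $k - q \geq K(k)$ is achievable with $k$ polynomially bounded: the hypothesis $K(k) \leq k(1 - 1/c)$ combined with $k = 2cq$ gives $k - q = (2c-1)q \geq 2q(c-1) \geq K(k)$. Ensuring $D$-regularity of $G$ and the condition $m \geq 2k$ is handled by standard padding with disjoint phantom edges on fresh vertices and with never-ranked filler candidates.
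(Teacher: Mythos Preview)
Your overall architecture matches the paper's: reduce from \textsc{Clique} on regular graphs, use $K-2$ shared candidates to place every edge voter's base count at $K-2$, and use many forcing voters to pin the shared and anchor candidates into any optimal committee. The step you are missing, however, is a real gap and breaks the reduction as written.

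After your first step you conclude: ``Hence any optimal committee is $\{s_1,\dots,s_{K-2},a_1,\dots,a_{k-q-K+2}\}\cup V_C$ for some $V_C\subseteq V$ of size~$q$.'' But the first step only forces the shared and anchor candidates in; it says nothing about the remaining $q$ slots. Those slots may be filled by forcing voters' private dummies, and because each forcing voter already has count $k-q\geq K$, convexity makes the marginal value $g_k(k-q{+}1)-g_k(k-q)$ potentially much larger than anything a vertex candidate can deliver through the (single) edge voters. Concretely, take $g_k(i)=0$ for $i<K$ and $g_k(i)=(i-K+1)^2$ for $i\geq K$; this is convex, polynomially bounded, and has $d_1=0$, $d_2=1$. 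Your edge-voter formula then gives the clique committee only $\binom{q}{2}$ extra points over the all-zero baseline, while replacing the $q$ vertex candidates by $q$ private dummies of $q$ distinct forcing voters gains $q\bigl(g_k(k-q{+}1)-g_k(k-q)\bigr)\geq q(2q+3)$, which exceeds $\binom{q}{2}$. So the dummy committee beats the clique committee and your threshold is met regardless of whether a clique exists.

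The paper plugs exactly this hole by duplicating each edge voter $2g_k(k)$ times. Then removing any dummy from a committee costs at most $g_k(k)$ (a dummy sits in one voter's top~$k$), while adding a vertex candidate that completes an edge with an existing committee member gains at least $2g_k(k)\cdot d_2>2g_k(k)$ via the $2g_k(k)$ copies of that edge's voter; a swap argument then rules out dummies in any optimal committee. If you insert this multiplicity factor (your choice of $N$ can absorb it, and the polynomial bound on $g_k(k)$ keeps the construction polynomial), the rest of your plan goes through essentially as in the paper.
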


\begin{proof}
  We prove $\np$-hardness of the problem 
  by giving a reduction from the \textsc{Clique} problem on regular
  graphs.  A graph is regular if all its vertices have the same
  degree.  In the \textsc{Clique} problem we are given a
  graph $G$ and an integer $h$, and we ask if there exists a set of
  $h$ pairwise adjacent vertices in $G$ (such a set of vertices is
  referred to as a \emph{size-$h$ clique}).  The problem remains
  $\np$-complete when restricted to regular
  graphs~\cite{gar-joh:b:int}.

  Let $G$ be the input regular graph, let $h$ be the size of the
  clique sought for, and let $\delta$ be the common degree of $G$'s
  vertices. If $h > \delta + 1$, then, of course, the graph does not
  contain a size-$h$ clique and we output a fixed ``no''-instance of
  our problem. Otherwise, we output an instance according to the
  following construction (intuitively, since each $g_k$ is convex, the rule
  promotes situations where voters rank many members of the committee
  among their top $k$ candidates; we exploit this fact).

  We set the committee size $k$ to be $(c+2)h$.  Since $g_k$ does not
  depend on the number of candidates in the election, this fixes the
  counting function that we work with and we will denote it $g$. If
  $k\le k_0$ (recall that $k_0$ is defined in the statement of the
  theorem), then we solve the input instance using brute force in
  polynomial time and output either a fixed ``yes''-instance or a
  fixed ``no''-instance, depending on the result.
  We note that for each $i$, $1 \leq i \leq \singularity{g}$, all the
  values $g(i) - g(i-1)$ are equal and, without loss of generality, we
  can assume them to either all be $0$s or all be $1$s (if this were
  not the case, we could scale $g$ appropriately). Similarly, since
  $g$ is convex, we can assume that
  $g(\singularity{g})-g(\singularity{g}-1) > 1$.  We note that $k -
  \singularity{g} \geq k / c = (c+2)h / c > h$ and, so,
  $\singularity{g} < k - h$.

  We form an election with the following candidates: 
%
  \begin{enumerate}

  \item For each vertex $v$ from the graph $G$, we create a candidate~$v$.


  \item We create a set \{$c_1, \ldots, c_{\singularity{g} - 2}\}$ of
    candidates, called the \emph{edge-filler candidates}.  These candidates will be in the top-$k$ positions of all
    the voters, and hence will be chosen to every  winning committee.

  \item We create a set $\{b_1, \ldots, b_{k-h-(\singularity{g}-2)}\}$
    of candidates, called \emph{general-filler candidates}. There will be sufficiently many voters who rank them in their top-$k$
    positions so that they will also be in every winning committee.
    
 \item We also create a set of dummy candidates, such that each dummy candidate is ranked among
  the top-$k$ positions of exactly one voter.
  
  \end{enumerate}
  Let $m$ be the total number of edges in $G$.  For each edge $e$, we
  create a set of $2g(k)$ voters corresponding to this edge; each
  voter in this set has the following candidates in the top $k$
  positions of his or her preference order:
  \begin{enumerate}

  \item The two candidates corresponding to the endpoints of $e$.

  \item All the edge-filler candidates.

  \item Sufficiently many dummy candidates (such that they are ranked
    among top $k$ positions only by this voter).

  \end{enumerate}
  Further, we create $2g(k) \cdot (m+h) \cdot g(k)$ filler voters, who rank the
  following candidates in the top $k$ positions:
  \begin{enumerate}

  \item All the edge-filler candidates.  

  \item All the general-filler candidates.

  \item Sufficiently many dummy candidates (different dummy candidates
    for each filler voter).
  \end{enumerate}

\noindent
  (The role of the $2g(k)$ multiplicity factor regarding both the edge
  voters and the filler voters it to ensure that the best committee
  does not contain any of the dummy candidates; this will become clear
  later in the proof.) 

  We ask whether there is a committee $W$ whose score is at least $T =
  T_1 + T_2 + T_3 + T_4$, where:
  \begin{align*}
    T_1 & = 2g(k) \cdot (m+h)\cdot g(k)\cdot g(k-h), \\
    T_2 &=  2g(k) \cdot m \cdot g(\singularity{g}-2), \\
    T_3 &=  2g(k) \cdot \delta h \cdot \big( g(\singularity{g}-1) - g(\singularity{g}-2) \big ),\\
    T_4 &=  2g(k) \cdot {\textstyle\binom{h}{2}} \big(  g(\singularity{g}) - g(\singularity{g}-2) -2( g(\singularity{g}-1) - g(\singularity{g}-2) )  \big).
  \end{align*}
 Note that each $T_i$, $1 \leq i \leq 4$, is nonnegative (for $T_4$ this is due
  to convexity of $g$).   The meaning of these values will become clear throughout the proof.
%
  This finishes the construction. Due to the
  assumptions regarding the counting function, the reduction is
  polynomial-time computable.

  Let us now argue that the reduction is correct. First, we claim that
  if a committee $W$ has a score of at least $T$, then it must contain
  all the edge-filler candidates and all the general-filler
  candidates. We note that altogether we have $k-h$ edge-filler and
  general-filler candidates.  Consider some committee $W'$ that
  contains $k - h - x$ candidates of these two types, where $x \geq
  1$. This means that $W'$ contains at most $h+x$ dummy candidates.
 
 
  Let $y$ be the number of filler voters that rank at least $k-h$
  members of $W'$ among their top $k$ positions. Let us call these
  filler voters well-satisfied. For each of the well-satisfied filler
  voters, the members of $W'$ ranked on top $k$ positions are
  (a)~the $k-h-x$ edge-filler and general-filler candidates from $W'$,
  and
  (b)~at least $x$ unique dummy candidates.
  Thus it must hold that $xy \leq h+x$ and, so, $y \leq
  \frac{h}{x}+1$. If $x \geq 2$, then it must be that $y \leq h$. If
  $x = 1$, then this inequality gives us that $y \leq h+1$.  However,
  for $y$ to be $h+1$, $W'$ would have to consist of $k-h-1$
  edge-filler and general-filler candidates and $h+1$ dummy
  candidates. Each of these dummy candidates would have to be ranked
  among top $k$ positions by exactly one of the $y$ well-satisfied
  filler voters.
  This would mean that for each edge voter, the only members of $W'$
  ranked by this voter among top $k$ positions would be (some of) the
  edge-filler candidates. Consequently, all the edge voters would rank
  at most $k-h-1$ members of $W'$ among their top $k$ positions. In
  either case (that is, irrespective if $x=1$ or $x \geq 2$), we can upper-bound the score of committee $W'$ by
  assuming that there are $2g(k)\cdot (m+h)\cdot g(k) - h$ voters that assign
  score $g(k-h-1)$ to $W'$ and $2g(k)\cdot m + h$ voters that assign score $g(k)$
  to it.
%
%
  In effect, we have the following inequalities (also see the
  explanations below):
  %
  \begin{align*}
    \score(W') & \leq \big(2g(k)\cdot (m+h) \cdot g(k) - h)\big) \cdot g(k-h-1) + (2g(k)\cdot m+h) \cdot g(k) \\
               & =    2g(k)\cdot (m+h) \cdot g(k) \cdot g(k-h-1) - h \cdot g(k-h-1) + (2g(k)\cdot m+h) \cdot g(k) \\
               & <    2g(k)\cdot (m+h) \cdot g(k) \cdot \big(g(k-h)-1\big) - h \cdot g(k-h-1) + (2g(k)\cdot m+h) \cdot g(k) \\
               & =    T_1 - 2g(k) \cdot (m+h) \cdot g(k) - h \cdot g(k-h-1) + (2g(k)\cdot m+h) \cdot g(k) \\
               & =    T_1 - 2g(k) \cdot (m+h) \cdot g(k) - h \cdot g(k-h-1) + 2g(k)\cdot m \cdot g(k) +h \cdot g(k) \\
               & =    T_1 - 2g(k) \cdot h \cdot g(k) - h \cdot g(k-h-1) +h \cdot g(k) 
                \leq T_1 < T.
  \end{align*}
  %
  %
  The second inequality holds because $g(k-h) > g( k - h - 1) + 1$ (which
  holds due to the fact that $g$ is convex,
  $g(\singularity{g})-g(\singularity{g}-1) > 1$, and $\singularity{g}
  < k - h$). Further inequalities hold due to simple calculations.
  Due to the above reasoning, we can assume that every committee with
  score at least $T$ contains all the $k-h$ filler candidates.

  Consider some committee that contains all the $k-h$ filler
  candidates. We claim that if this committee contains some dummy
  candidates then there is another committee with a higher score.  Why
  is this so? Assume that the committee contains some $z$ dummy
  candidates $(z \leq h)$. If we simply removed these dummy candidates
  (obtaining a smaller committee) then we would lose at most $z\cdot
  g(k)$ points.  Then, we could bring the committee back to its
  intended side by performing the following operations sufficiently
  many times: Either adding to the committee a single vertex candidate
  (already connected by an edge to one from the committee) or adding
  to the committee two vertex candidates connected by an edge. Each of
  these actions increases the score of the committee by at least
  $2g(k) \big( g(\singularity{g}) - g(\singularity{g}-1) \big) >
  2g(k)$ (because for each edge there are $2g(k)$ corresponding edge
  voters).  Thus, would obtain a committee with a score higher than
  the one we have started with. (Note that, technically, there might
  be no sequence of operations that brings our committee back to size
  $k$, but this would only happen if the graph had too few edges to
  contain a clique of size $h$ and we could recognize that this is the
  case in polynomial time.)

  Let $W$ be some winning committee that contains all the $k-h$ filler
  candidates, and some $h$ vertex candidates (by the above paragraph,
  this committee cannot contain any dummy candidates), and let $r$ be
  the number of edges that connect the vertices corresponding to the
  vertex candidates from $W$.  Let us now calculate the score of $W$.
  The filler candidates provide score $T_1$. The situation regarding
  the edge voters requires more care.

  Each edge voter gets score at least $g(\singularity{g}-2)$ due to
  the edge-filler candidates. For each edge for which at least one
  endpoint is in $W$, we get additional $g(\singularity{g}-1) -
  g(\singularity{g}-2)$ points, and for each edge whose both endpoints
  are in $W$, we get yet additional $g(\singularity{g}) -
  g(\singularity{g}-1)$ points. Thus, the edge voters give $W$ the
  following score (see detailed explanations below):
  \begin{align*}
    \underbrace{ \bigg( 2g(k)  \cdot m \cdot  g(\singularity{g}-2) \bigg)}_{=T_2}  + \underbrace{\bigg( 2g(k) \cdot \delta h \cdot \big( g(\singularity{g}-1) -  g(\singularity{g}-2) \big) \bigg)}_{=T_3} \\
    + \underbrace{\bigg( 2g(k) \cdot r \cdot \big( g(\singularity{g}) - g(\singularity{g}-2)  -2( g(\singularity{g}-1) - g(\singularity{g}-2) ) \big) \bigg)}_{\leq T_4}.
  \end{align*}
  The first main term corresponds to the points all the edge voters
  receive, the second is the correction for edge voters that
  correspond to edges that have at least one endpoint in $W$ (note
  that if for some edge both its endpoints belong to $W$, then we add
  $g(\singularity{g}-1) - g(\singularity{g}-2)$ twice, once for each
  endpoint), and the final term corresponds to the correction for
  edges that have two endpoints in $W$. Let us now explain why this
  final correction is appropriate. Consider some edge voter for an
  edge whose both endpoints are in $W$. For this voter, we account
  $g(\singularity{g}-2)$ points that each edge voter gets, we account
  $g(\singularity{g}-1) - g(\singularity{g}-2)$ points for each of the
  endpoints, and $g(\singularity{g}) - g(\singularity{g}-1) -2(
  g(\singularity{g}-1) - g(\singularity{g}-2) )$ points of the final
  correction. Altogether, this sums up to:
  \begin{align*}
    g(\singularity{g}-2) & + 2(g(\singularity{g}-1) -  g(\singularity{g}-2)) + g(\singularity{g}) - g(\singularity{g}-1) \\
                         & -2( g(\singularity{g}-1) - g(\singularity{g}-2) ) = g( \singularity{g} ).
  \end{align*}
  This means that, indeed, we compute the score of edge voters for
  edges whose both endpoints are in $W$ correctly. The same holds for
  all the other edge voters (and follows directly from the above
  analysis).

  Finally, we note that the score $W$ that we obtain from the edge
  voters is maximized when $r$ is maximized. The maximum value that
  $r$ may have is ${h \choose 2}$, which happens if and only if the
  vertex candidates in $W$ correspond to a clique. Then the score that
  the edge voters provide equals $T_2+T_3+T_4$ and the total score of
  the committee is $T$.

  We conclude, that there exists a committee with score at least $T$
  if and only if the input graph contains a size-$h$ clique.
\end{proof}  

Let us now discuss the assumptions of the theorem, where they come
from and why we believe they are natural (or necessary). 

First, the assumption that the counting functions are computable in
polynomial time is standard and clear. Indeed, it would not be
particularly interesting to seek hardness results if already the
counting functions were hard to compute.

Second, we believe that the assumption that the counting functions
$g_{m,k}$ do not depend on $m$ is reasonable. For example, it is quite
intuitive that adding some candidates that all the voters rank last
should not have any effect on the committee selected by a
top-$k$-counting rule.  (The assumption is also very helpful on the
technical level. Our construction uses a number of dummy candidates
that depends on the values of the counting function. If the values of
the counting function depended on the number of candidates, we might
end up with a very problematic, circular dependence.)

Third, the assumption that there is a constant $c$ such that for each
large enough committee size $k$ we have $k - \singularity{g_k} \geq k
/ c$ says that the function ``shows its convex behavior'' early
enough.  As shown in Proposition~\ref{pro:nearly-perfectionist}, some
assumption of this form is necessary (though there is still a gap,
since the bounds from the theorem and from
Proposition~\ref{pro:nearly-perfectionist} do not match perfectly),
and it is the core of the theorem.
%
%
%

Finally, perhaps the least intuitive assumption in this theorem is the
requirement that for a given committee size $k$, the highest value of
the counting function is polynomially bounded in $k$.
The reason for having it 
is that if the highest value were extremely large (say, exponentially
large with respect to $k$) then, for sufficiently few voters (for
example, polynomially many), the rule might degenerate to a
polynomial-time computable rule (for example, it might resemble the
Perfectionist rule for this case).  Exactly to avoid such problems, in
our proof we use a number of voters that depends on $g_k(k)$. Our
reduction would not run in polynomial time if $g_k(k)$ were
superpolynomial.



A result similar to Theorem~\ref{thm:convex-hard}, but for concave
rules, is possible as well (and, in essence, follows from the proofs
of Skowron, Faliszewski, and Lang~\shortcite{sko-fal-lan:c:collective}
and Aziz et
al.~\shortcite{azi-gas-gud-mac-mat-wal:c:approval-multiwinner}).
Thus, in general, top-$k$-counting functions tend to be $\np$-hard to
compute.  What can we do if we need to use them anyway?  There are
several possibilities.  We consider approximability and
fixed-parameter tractability as possible approaches.

\subsection{Approximability}

First, for concave top-$k$-counting rules we can obtain a
constant-factor approximation algorithm (we deduce it from the result
of Skowron, Faliszewski, and Lang~\cite{sko-fal-lan:c:collective},
which---in essence---boils down to optimizing a submodular function 
using the seminal results of Nemhauser et
al.~\cite{nem-wol-fis:j:submodular}).

\newcommand{\textprosubmodular}{Let $\calR_f$ be a top-$k$-counting
  rule defined through a family $f$ of (polynomial-time computable)
  top-$k$-counting functions $f_{m,k}\colon [m]_k\to \N$ with
  corresponding counting functions $g_{m,k}$ that are concave.
Then there is a polynomial-time algorithm
  that, given an election $E$ and a committee size $k$, computes a committee $W$ of size $k$, whose score, under
$\calR_f$, is at least a
  $(1-\frac{1}{e})$ fraction of the score of the winning committee(s) from $\calR_f(E,k)$}


\begin{theorem}\label{pro:submodular}
  \textprosubmodular
\end{theorem}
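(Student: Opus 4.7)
The plan is to reduce the problem to maximizing a monotone submodular set function subject to a cardinality constraint, and then to invoke the classical $(1-1/e)$-approximation greedy algorithm of Nemhauser, Wolsey, and Fisher. Concretely, I would view the score as a set function $F\colon 2^C\to\mathbb N$ defined by
\[
F(W) \;=\; \sum_{v\in V} g_{m,k}\bigl(|W\cap T_v|\bigr),
\]
where $T_v$ denotes the set of the top-$k$ candidates of voter $v$. A size-$k$ winning committee is exactly a size-$k$ set maximizing $F$, so approximating $F$ to within a factor of $1-1/e$ is equivalent to approximating the score of an $\calR_f$-winner to within the same factor.

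The first key step is to verify that $F$ is monotone and submodular. Monotonicity is immediate from Proposition~\ref{pro:g-nondec}: each $g_{m,k}$ is nondecreasing, and for any $W\subseteq W'$ and any voter $v$ we have $|W\cap T_v|\le |W'\cap T_v|$. For submodularity, it suffices to show that each summand $W\mapsto g_{m,k}(|W\cap T_v|)$ is submodular, since a nonnegative sum of submodular functions is submodular. Fix a voter $v$ and let $c\in C$; if $c\notin T_v$ then the marginal $g_{m,k}(|(W\cup\{c\})\cap T_v|) - g_{m,k}(|W\cap T_v|)$ is $0$ for every $W$, and if $c\in T_v$ then this marginal equals $g_{m,k}(j+1)-g_{m,k}(j)$ where $j = |W\cap T_v|$. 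Because $g_{m,k}$ is concave (and nondecreasing), these discrete differences are nonincreasing in $j$, so the marginal weakly decreases as we enlarge $W$. That is exactly the definition of submodularity.

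Once $F$ is established to be nonnegative, monotone, and submodular, and evaluable in polynomial time (this follows from polynomial-time computability of $f_{m,k}$, hence of each $g_{m,k}$), the standard greedy algorithm---start with $W_0=\emptyset$ and at each of $k$ steps add a candidate $c$ maximizing $F(W_{i-1}\cup\{c\})-F(W_{i-1})$---returns a committee $W$ satisfying $F(W)\ge (1-\tfrac1e)\max_{|W^*|=k}F(W^*)$ by the theorem of Nemhauser, Wolsey, and Fisher~\cite{nem-wol-fis:j:submodular}. This gives the desired approximation guarantee and runs in polynomial time in $n$, $m$, and $k$.

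The only delicate point is checking submodularity rigorously, and this is where concavity of $g_{m,k}$ is essential: if $g_{m,k}$ were convex instead, the discrete differences would be nondecreasing and the composition with the (modular) intersection-size function would be supermodular, not submodular, so the greedy analysis would not apply. Everything else---monotonicity, nonnegativity, polynomial evaluability, and the reduction to a cardinality-constrained submodular maximization---is routine once the correct set-function viewpoint is adopted.
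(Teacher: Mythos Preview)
Your proposal is correct and follows essentially the same approach as the paper: the paper invokes Theorem~4 of Skowron, Faliszewski, and Lang~\cite{sko-fal-lan:c:collective} (noting that concave counting functions yield nonincreasing OWA operators via Proposition~\ref{pro:owa}), and the paper itself remarks that this cited result ``in essence---boils down to optimizing a submodular function using the seminal results of Nemhauser et al.''---which is precisely the argument you carry out directly. Your version is more self-contained, but the underlying mechanism (concavity $\Rightarrow$ submodularity $\Rightarrow$ greedy $(1-\tfrac1e)$-approximation) is the same.
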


\begin{proof}
  This follows from the fact that concave top-$k$-counting rules
  correspond to OWA-based rules that use nonincreasing OWA operators.
  For such rules, there is a $(1-\frac{1}{e})$-approximation algorithm
  for computing the score of the winning committees and for computing
  a committee with such
  score~\cite[Theorem~4]{sko-fal-lan:c:collective}.
\end{proof}

Such a general result for convex counting functions seems
impossible. Let us consider a convex counting function $g_{m,k}(x) =
\max(x-1,0)$ that is nearly identical to the linear counting function used by
Bloc. Let us refer to the top-$k$-counting rule defined by $(g_{m,k})_{k \leq m}$ as
NearlyBloc.
%
%
If
we had a polynomial-time constant-factor approximation algorithm for
NearlyBloc,
we would have a constant-factor approximation algorithm for the
\textsc{Densest at most $K$ Subgraph} problem (abbreviated as
\textsc{DamkS}; see below). Taking into account the results of Khuller
and Saha~\shortcite{sahaDensestSubgraphs}, Raghavendra and
Steurer~\shortcite{conf/stoc/RaghavendraS10}, and
Alon~et~al.~\shortcite{techreport/inaaproxDkS}, this seems very
unlikely.

Before we define the \textsc{DamkS} problem, we need to provide some
notation.  In this paper we reserved the symbols $E$ and $V$ to denote
elections and voter collections, respectively. These symbols are also
commonly used to denote the sets of edges and vertices of graphs. To
avoid confusion, given a graph $G$, we refer to its sets of vertices
and edges as $V(G)$ and $E(G)$, respectively.  The \emph{density} of a
graph $G$ is defined as $\delta = \frac{|E(G)|}{|V(G)|}$.

\begin{definition}
  In the \textsc{Densest at most $K$ Subgraph} problem,
  \textsc{DamkS}, we are given a graph $G$ and we ask for a subgraph
  of $G$ of the highest possible density with at most $K$ vertices. 
\end{definition}

%
%


\newcommand{\textthminapprox}{There is no polynomial-time
  constant-factor approximation algorithm for the problem of computing
  the score of a winning committee under NearlyBloc,
  unless such an algorithm exists for the
  \textsc{DamkS} problem.}

\begin{theorem}\label{thm:inapprox}
  \textthminapprox
\end{theorem}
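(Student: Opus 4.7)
The plan is to give an approximation-preserving reduction from the Densest $K$-Subgraph problem (\textsc{DkS}) to NearlyBloc winner determination, and then invoke the approximation-preserving equivalence between \textsc{DkS} and \textsc{DamkS} established by Khuller and Saha~\cite{sahaDensestSubgraphs}.

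Given an instance $(G, K)$ of \textsc{DkS}, I would construct the following election. The candidate set is $V(G)$ together with a collection of \emph{private dummies}: for each edge $e \in E(G)$ and each $i \in [M]$ (with $M = K$), introduce a fresh set $D_{e,i}$ of $K-2$ dummies. For each edge $e = \{u,w\}$, introduce $M$ voters; the $i$th such voter ranks $u$ first, then $w$, then the elements of $D_{e,i}$, and finally all remaining candidates in any fixed order. Set the committee size to $k = K$. The assumption $m \geq 2k$ is easily arranged.

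Recall that NearlyBloc is top-$k$-counting with $g(x) = \max(0, x-1)$. For a committee $S \subseteq V(G)$ of size $K$, voter $v_{e,i}$ contributes $\max(0, |S \cap \{u,w\}| - 1)$, which equals $1$ iff both endpoints of $e$ lie in $S$ and $0$ otherwise, so $\score(S) = M \cdot |E(G[S])|$. For a general committee with $V' = S \cap V(G)$ of size $K-j$ and $j$ dummies, each dummy lies in the top-$K$ set of exactly one voter; splitting the sum over voters by $|V' \cap \{u,w\}| \in \{0,1,2\}$ and using $\max(0, x-1) \le x$ yields the bound $\score(S) \leq M \cdot |E(G[V'])| + j \leq M \cdot \OPT_{\mathrm{DkS}}(K) + K$.

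Given any $\alpha$-approximate committee $S$ with $\sigma \geq \alpha \cdot \OPT_{\mathrm{NB}} \geq \alpha M \cdot \OPT_{\mathrm{DkS}}(K)$, I would take $V' = S \cap V(G)$ and pad it to a set of size $K$ with arbitrary extra vertices; this has at least $|E(G[V'])| \geq (\sigma - j)/M \geq \sigma/M - 1 \geq \alpha \cdot \OPT_{\mathrm{DkS}}(K) - 1$ edges. When $\OPT_{\mathrm{DkS}}(K)$ is bounded by a constant, the problem is solvable exactly in polynomial time (enumerate edge sets of bounded size and check whether they fit in $K$ vertices), so overall this gives a constant-factor \textsc{DkS} approximation, which by Khuller and Saha lifts to a constant-factor \textsc{DamkS} approximation. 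The main technical obstacle is controlling this ``dummy leakage'' in $\score(S)$; inflating $M$ relative to $K$ is precisely what makes the additive $j$-term negligible against the multiplicative $M \cdot \OPT_{\mathrm{DkS}}$ and thereby preserves the approximation ratio.
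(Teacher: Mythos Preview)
Your argument is correct, but it takes a somewhat different route from the paper's proof.

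The paper reduces \emph{directly} to \textsc{DamkS}: given $(G,K)$, it builds one election $E_B$ for \emph{each} $B \le K$, with a \emph{single} voter per edge and $B-2$ dummies per edge, runs the assumed NearlyBloc approximator on each $E_B$, and among all the returned committees (together with a trivial single-edge graph) outputs the one inducing the densest subgraph. The analysis is a density computation with a case split on whether the optimal number of edges $X$ satisfies $X \le B/\theta$; in the nontrivial case they show that removing dummies from the returned committee only \emph{raises} the density, yielding a $\theta/2$-approximation for \textsc{DamkS} without invoking any external equivalence.

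Your approach instead fixes $B=K$, amplifies by taking $M=K$ copies of each edge voter (each with private dummies) so that the ``dummy leakage'' becomes an additive $-1$ in edge count rather than a multiplicative loss, disposes of small-$\OPT$ instances by brute force, and then appeals to the Khuller--Saha equivalence to pass from \textsc{DkS} to \textsc{DamkS}. This is more modular and avoids the density calculation, at the cost of relying on an outside result and a slightly larger (but still polynomial) instance. The paper's direct route is self-contained and makes the connection to \textsc{DamkS} explicit; your route is arguably cleaner once Khuller--Saha is taken as a black box.
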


\begin{proof}
  Let $\theta$ be a positive real, $0 < \theta < 1$.  For the
  sake of contradiction, let us assume that there is a polynomial-time
  algorithm $\mathcal{A}$ that, given an election $E$ and committee
  size $k$, outputs a committee $W$ such that, under NearlyBloc the
  score of $W$ is at least an $\theta$ fraction of the score of the
  winning committee.  Using $\mathcal{A}$, we will derive an
  $\frac{\theta}{2}$-approximation algorithm for the \textsc{DamkS}
  problem.

  Let $I$ be an instance of the \textsc{DamkS} problem with a graph $G$
  and an integer $K$. Our algorithm proceeds as follows. For each $B$,
  $1 \leq B \leq K$, we form an election $E_B = (C_B,V_B)$ where:
  \begin{enumerate}

  \item The set of candidates is $C_B = V(G) \cup \bigcup_{e \in E(G)}
    D_e$, where for each $e \in E(G)$, $D_e = \{d_{e, 1}, \ldots d_{e,
      B-2}\}$ is the set of dummy candidates needed for
    our construction.

  \item The collection $V_B$ of voters is such that for each edge $e =
    \{u_1,u_2\} \in E(G)$ we have exactly one voter with preference
    order of the form $\{u_1,u_2\} \pref D_e \pref \cdots$.

  \end{enumerate}
  For each election $E_B$, we run algorithm $\mathcal{A}$ to find a
  committee $W_B$ of size $B$. Each such committee $W_B$ generates an
  induced graph $G_B$ with the vertex set $V(G) \cap W_B$.  We let
  $G_0$ be the trivial subgraph of $G$ consisting of two vertices and
  their connecting edge (if $G$ had no edges, then we could output a
  trivial optimal solution at this point).  We output the densest
  graph among $G_0, G_1, \ldots, G_K$.

  Let us now argue that the above algorithm is an
  $\frac{\theta}{2}$-approximation algorithm for the \textsc{DamkS}
  problem.  Let $\mathrm{OPT}$ be an optimal solution for $I$, with
  the densest subgraph $G'$ consisting of $B$ vertices and $X$
  edges. By definition, $G'$ has density $\delta = \frac{X}{B}$. For each $B$ let
  us consider two cases:
  \begin{description}
  
  \item[Case 1: $\boldsymbol{X \leq \frac{B}{\theta}}$.] In this case,
    the density of the optimal graph is at most equal to
    $\frac{1}{\theta}$. However, a trivial solution with two vertices
    connected with an edge has density equal to $\frac{1}{2}$. Thus,
    in this case this trivial solution is
    $\frac{\theta}{2}$-approximate.

  \item[Case 2: $\boldsymbol{X > \frac{B}{\theta}}$.]  In this case
   we know that there exists a size-$B$ committee for
    election $E_B$ with score at least $X$. Indeed, the committee that
    consists of the vertices from $G'$ obtains one point for
    each edge from $G'$ and has score $X$. Thus
    $\mathcal{A}$ for $E_B$ and committee size $B$ outputs a committee
    $W'$ with score at least $\theta X$. Let $U' = W' \cap V(G)$
    (that is, let $U'$ be the part of this committee that consists of the
    vertex candidates) and let $D' = W' - U'$ (that is, let $D'$ be the
    set of dummy candidates from $W'$).  We observe that the graph
    induced by $U'$ has at least $\theta X - |D'|$ edges. To see this,
    note that since each dummy candidate is ranked among top $B$
    positions by exactly one voter, removing a dummy candidate from
    the committee---in effect decreasing the committee
    size---decreases the total score by at most one. Thus the
    committee consisting only of candidates from $U'$ has score at
    least $\theta X - |D'|$ and each of the points obtained by this
    committee comes from an edge between some members of $U'$.

    The graph induced by $U'$ has density $\delta'$ such that:
    \begin{align*}
      \delta' = \frac{\theta X - |D'|}{B - |D'|} 
               = \frac{\theta X}{B} \cdot \frac{B(\theta X - |D'|)}{\theta X \cdot (B - |D'|)} 
               = \theta \delta \cdot \frac{B(\theta X - |D'|)}{\theta X \cdot (B - |D'|)}
               \geq \theta \delta,
    \end{align*}
    where the last inequality follows from the assumption that $B <
    \theta x$. Indeed, note that:
    \[ 
       B(\theta X - |D'|) = \theta X B - B|D'| \geq \theta X B - \theta X |D'| = \theta X \cdot (B - |D'|).
    \]
    By our assumptions, one
    of these conditions must hold. This means that the graph induced
    by $U'$ is an $\theta$-approximate solution for $I$.
    
  \end{description}
  Since in both cases we obtain at least
  $\frac{\theta}{2}$-approximate solutions, our algorithm is
  $\frac{\theta}{2}$-approximate. Since it is clear that it runs in
  polynomial time, the proof is complete.
\end{proof}

Nonetheless, for top-$k$-counting rules that are not too far from
$\alpha_k$-CC, we have a polynomial-time approximation scheme (PTAS),
that is, an algorithm that can achieve any desired approximation
ratio, as long as the number of candidates is not too large relative
to the committee size. This result holds even for rules that are not
concave (provided they satisfy the conditions of the theorem); the
result follows by noting that our voters have non-finicky
utilities~\cite{sko-fal-lan:c:collective}.

\newcommand{\textthmptas}{Let $\calR_f$ be a top-$k$-counting committee scoring rule, where the family $f=(f_{m,k})_{k\le m}$ is
  defined through a family of counting functions $(g_{m,k})_{k \leq m}$ that are: (a)
  polynomial-time computable and (b) constant for arguments
  greater than some given value $\ell$. If $m
  = o(k^2)$, there is a PTAS for computing
  the score of a winning committee under~$\calR_f$.}

\begin{theorem}\label{thm:ptas}
  \textthmptas
\end{theorem}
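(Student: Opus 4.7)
The plan is to reduce the problem to the PTAS framework of Skowron, Faliszewski, and Lang~\shortcite{sko-fal-lan:c:collective} for OWA-based rules whose voters have non-finicky utilities. First, I would use Proposition~\ref{pro:owa} to rewrite $\calR_f$ as an OWA-based rule: the OWA family is $\Lambda_{m,k} = (g_{m,k}(1)-g_{m,k}(0),\ldots,g_{m,k}(k)-g_{m,k}(k-1))$ paired with the $k$-Approval scoring functions. The assumption that $g_{m,k}$ is constant for arguments greater than $\ell$ translates directly into $\lambda^t_{m,k} = 0$ for $t > \ell$, so the OWA operator has support only on its first $\ell$ coordinates and every voter's utility for every committee is bounded by the fixed value $g_{m,k}(\ell)$.

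Second, I would verify the non-finicky utilities condition of~\cite{sko-fal-lan:c:collective}: informally, that individual voter utilities are bounded by a quantity independent of $k$, and that the marginal contribution of any single committee member is similarly controlled. Both parts follow from the truncation at $\ell$, since adding or removing a single candidate from a prospective committee can shift a voter's contribution by at most $\max_i(g_{m,k}(i) - g_{m,k}(i-1))$, a constant in $k$ once $\ell$ is fixed.

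Third, I would invoke the sampling-plus-enumeration PTAS from~\cite{sko-fal-lan:c:collective}. The idea is to sample a sub-collection of voters of size depending only on $\epsilon$ and $\ell$, enumerate ``good'' committees with respect to this sample (exploiting that under non-finicky utilities only a small part of the committee structure per sampled voter matters), and return the best of these when evaluated on the full electorate. Standard concentration arguments then show that the best candidate committee is a $(1-\epsilon)$-approximation with high probability. The assumption $m = o(k^2)$ is what makes the enumeration polynomial: the number of sub-committees one has to consider per sampled voter is of order $\binom{m}{O(1)}$, and the $m=o(k^2)$ regime is precisely where this enumeration, combined with the sampling bounds, stays within a polynomial in the input size.

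The main obstacle will be matching notation and parameters between the two frameworks. In particular, one has to check that ``non-finicky'' in the sense of~\cite{sko-fal-lan:c:collective} is satisfied with parameters that depend only on $\ell$ (not on $m$ or $k$), so that the sample size stays bounded by a function of $\epsilon$ alone, and that the $m=o(k^2)$ regime corresponds exactly to the one for which their enumeration step is polynomial. Once these identifications are made, the theorem follows without further combinatorial work.
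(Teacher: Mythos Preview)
Your high-level plan---reduce to the OWA framework via Proposition~\ref{pro:owa} and then invoke the non-finicky PTAS of Skowron, Faliszewski, and Lang---is exactly the paper's route. However, two of the key identifications you describe are incorrect, and without fixing them the argument does not go through.

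First, your informal description of ``non-finicky'' is not what that notion means. In~\cite{sko-fal-lan:c:collective} (and as restated in the paper), a single-winner scoring function $\gamma_m$ is $(\xi,\delta)$-non-finicky if each of the top $\lceil \delta m\rceil$ values among $\gamma_m(1),\ldots,\gamma_m(m)$ is at least $\xi\cdot\gamma_m(1)$. It is a property of the \emph{scoring function}, not a statement about bounded voter utilities or bounded marginal contributions. The relevant instantiation here is that $\alpha_k$ is $(1,\tfrac{k}{m})$-non-finicky (the first $k$ positions all score $1$). Your proposed verification---bounding $g_{m,k}(\ell)$ and the differences $g_{m,k}(i)-g_{m,k}(i-1)$---is neither the hypothesis of their theorem nor sufficient to invoke it.

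Second, you have misplaced the role of the assumption $m=o(k^2)$. It is not needed to make any enumeration polynomial (and indeed $\binom{m}{O(1)}$ is polynomial in $m$ regardless). Rather, the black-box result one invokes gives, for an OWA operator supported on its first $\ell$ coordinates together with a $(1,\tfrac{k}{m})$-non-finicky scoring function, an approximation ratio of
\[
1 - \ell\,\exp\!\Big(-\frac{k^2}{m\,\ell^2}\Big).
\]
The condition $m=o(k^2)$ is exactly what drives this ratio to $1-o(1)$, yielding the PTAS. Once you correct these two points, the proof is a direct citation with no further combinatorics; your third paragraph about sampling, concentration, and enumeration is unnecessary and, as written, does not match the mechanism behind the cited result.
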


\begin{proof}
  We use the concept of non-finicky utilities provided by Skowron et
  al.~\shortcite{sko-fal-lan:c:collective}. Adapting their
  terminology, we say that a single-winner scoring function $\gamma_m\colon [m]\to \N$
  (for elections with $m$ candidates) is $(\xi,\delta)$-non-finicky for $\xi, \delta
  \in [0,1]$, if each of the highest $\lceil \delta m \rceil$
  numbers in the sequence $\gamma_m(1), \ldots, \gamma_m(m)$ is greater or
  equal to $\xi \gamma_m(1)$. It is easy to see that $\alpha_k$ is
  $(1,\frac{k}{m})$-non-finicky.

  Consider an input election $E = (C,V)$ with $m$ candidates, and
  committee size $k$, such that $m = o(k^2)$. 
  By Proposition~\ref{pro:owa}, we know that $f_{m,k}$ is OWA-based,
  that it uses some OWA operator $\Lambda_{m, k}$ that has nonzero
  entries on the top $\ell$ positions only, and that it uses scoring
  function $\alpha_k$ (which is a $(1,
  \frac{k}{m})$-non-finicky). Thus, due to Skowron et
  al.~\shortcite{sko-fal-lan:c:collective}, there is a polynomial-time
  $\left(1 - \ell
    \exp\left(-\frac{k^2}{m\ell^2}\right)\right)$-approximation
  algorithm for computing the score of a winning committee under
  $f$.\footnote{Strictly speaking, the exact formulation of the result
    that we invoke here appears only in the full version of their
    work~\cite[Theorem~29]{sko-fal-lan:c:collective-fulltext}, and not
    in the conference extended abstract.} Using the assumption that $m
  = o(k^2)$, the approximation ratio of the algorithm is:
  \begin{align*}
    \alpha & = 1 - \ell \exp\left(-\frac{k^2}{m\ell^2}\right) = 1 - \ell \exp\left(-\frac{k^2}{o(k^2)\ell^2}\right) 
    = 1 - \ell \exp\left(-\frac{1}{o(1)}\right) = 1 - o(1).
  \end{align*}
  This completes the proof.
\end{proof}

Theorem~\ref{thm:ptas} is quite remarkable even for the case of
$\alpha_k$-CC (let alone that it applies to a somewhat more general
set of rules). Indeed, generally, variants of the Chamberlin--Courant
rule that use some sort of approval scoring function are hard to
compute~\cite{pro-ros-zoh:j:proportional-representation,bet-sli-uhl:j:mon-cc}
and the best possible approximation ratio for a polynomial-time
algorithm, in the general case, is $1-\frac{1}{e}$ (this result was
observed by Skowron and Faliszewski~\cite{sko-fal:c:maxcover} and
follows from results for the MaxCover problem~\cite{fei:j:cover}).
However, this upper bound relies on the fact that there is no
connection between the size of the input election, the committee size,
and the number of candidates that each voter approves. We obtain a
PTAS because we assume that for the committee size $k$ each voter
approves of $k$ candidates, and that the number $m$ of candidates is
such that $m = o(k^2)$.

One may ask how likely it is that this last assumption holds.
As a piece of anecdotal evidence, we mention that in the 2015
parliamentary elections in Poland, there were $k=460$ seats in the
parliament and $m \approx 8000$ candidates. In this case, $m/{k^2}
\approx 0.0378$, which suggests that our algorithm could be effective
(provided that the voters could say which $k$ candidates they approve
of; likely, this would require some sort of simplified ballots, for
example, allowing one to approve blocks of candidates).

\subsection{Fixed-Parameter Tractability}

If one were not interested in approximation algorithms but still
wanted to use top-$k$-counting rules, then one might seek
fixed-parameter tractable algorithms.  In parameterized complexity we
concentrate on some distinguished parameter in the problem instances,
such as the number of candidates or the number of voters. We say that
a parameterized problem is fixed-parameter tractable (is in $\fpt$) if
there is an algorithm that, given an instance of this problem of size
$n$ with parameter $t$, computes an answer for the problem in time
$f(t)n^{O(1)}$, where $f$ is some computable function (such an
algorithm is also said to run in $\fpt$-time with respect to parameter
$t$).  For a detailed description of parameterized complexity, we
point the readers to the books of Downey and
Fellows~\shortcite{dow-fel:b:parameterized},
Niedermeier~\shortcite{nie:b:invitation-fpt}, and Cygan et
al.~\cite{cyg-fom-kow-lok-mar-pil-pil-sau:b:fpt}.

We start with a very simple observation, namely that a winning committee can
be computed for every top-$k$-counting rule in $\fpt$ time for the
parameterization by the number of candidates.

\begin{proposition}\label{prop:fptCandidates}
  Let $\calR_f$ be a top-$k$-counting committee scoring rule, where
  the family $f=(f_{m,k})_{k\le m}$ is defined through a family of
  counting functions $(g_{m,k})_{k \leq m}$ (that are computable in
  $\fpt$ time with respect to $m$).  There is an algorithm that, given
  a committee size $k$ and an election $E$, computes a winning
  committee from $\calR_f(E,k)$ in $\fpt$-time with respect to the
  number $m$ of candidates.

%
\end{proposition}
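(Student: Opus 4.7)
The plan is to give a straightforward brute-force enumeration algorithm, exploiting the fact that when $m$ is the parameter, the total number of size-$k$ committees is already bounded by a function of $m$ alone.

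First, I would note that since $k \le m$, the number of committees to consider is at most $\binom{m}{k} \le 2^m$, which depends only on the parameter $m$. The algorithm simply iterates over all size-$k$ subsets $S \subseteq C$, computes $\score(S)$, and returns one with the maximum score (or all tied maximizers, if the full set $\calR_f(E,k)$ is desired).

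Next, I would argue that the score of a single committee $S$ can be computed in polynomial time. For each voter $v \in V$, the contribution $f_{m,k}(\pos_v(S))$ equals $g_{m,k}(s_v)$, where $s_v = |S \cap \mathrm{top}_k(v)|$ is the number of members of $S$ that $v$ ranks among her top $k$ positions. Computing $s_v$ takes $O(mk) = O(m^2)$ time per voter (we scan $v$'s top $k$ positions and test membership in $S$), and evaluating $g_{m,k}(s_v)$ is $\fpt$-time in $m$ by assumption. Summing over the $n$ voters gives $\score(S)$ in time $n \cdot h(m)$ for some computable function $h$.

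Putting the two steps together, the overall running time is $2^m \cdot n \cdot h(m) = h'(m) \cdot n^{O(1)}$, which is $\fpt$ with respect to $m$. Correctness is immediate because we exhaustively enumerate every candidate committee and return one of maximal $\calR_f$-score. There is no substantive obstacle here; the statement is really a remark that once $m$ is fixed, the search space is bounded and each committee's score is cheap to evaluate under the mild assumption that $g_{m,k}$ itself is tractable in~$m$.
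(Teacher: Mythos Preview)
Your proposal is correct and follows essentially the same approach as the paper: brute-force enumerate all $\binom{m}{k}$ committees (the paper bounds this by $O(m^m)$, you by $2^m$, either is a function of $m$ alone), compute each committee's score in $\fpt$ time, and output a best one. Your write-up is in fact more detailed than the paper's two-sentence proof.
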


\begin{proof}
  The algorithm simply computes the score of every possible committee
  and outputs the one with the highest score. With $m$ candidates and
  committee size $k$, the algorithm has to check $\binom{m}{k} =
  O(m^m)$ committees, and checking each committee requires $\fpt$ time
  only.
\end{proof}

For rules based on concave counting functions we can also provide a
far less trivial $\fpt$ algorithm for the parameterization by the
number of voters.

\newcommand{\textthmfptn}{
Let $\calR_f$ be a top-$k$-counting committee scoring rule, where the family $f=(f_{m,k})_{k\le m}$ is
  defined through a family of concave counting functions $(g_{m,k})_{k \leq m}$.  There is an algorithm that, given a committee size $k$ 
and an election $E$, computes a winning committee from $\calR_f(E,k)$
 in $\fpt$-time with respect to the number $n$ of voters.}

\begin{theorem}\label{thm:fpt-n}
  \textthmfptn
\end{theorem}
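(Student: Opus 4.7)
The plan is to reduce the problem to a mixed-integer linear program with at most $2^n$ integer variables and then invoke Lenstra's theorem, which solves such MILPs in $\fpt$-time with respect to the number of integer variables.

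First, for each candidate $c$ I would compute the \emph{voter pattern} $\pi(c) = \{i \in [n] \colon \pos_{v_i}(c) \leq k\}$, i.e., the set of voters that rank $c$ among their top $k$ positions. Candidates are grouped by their patterns, giving at most $2^n$ groups, and for each pattern $P \subseteq [n]$ we let $n_P$ denote the number of candidates with pattern $P$. Since $\calR_f$ is top-$k$-counting, the score contributed by voter $i$ depends only on $|W \cap T_i|$, where $T_i$ is voter $i$'s set of top-$k$ candidates; consequently, only the number $x_P$ of candidates chosen from each group matters, not which ones. This reduces the winner-determination problem to choosing integer values $x_P \in \{0, 1, \ldots, n_P\}$ with $\sum_P x_P = k$ so as to maximize
\begin{align*}
\sum_{i=1}^n g_{m,k}(y_i), \qquad y_i = \sum_{P \ni i} x_P.
\end{align*}

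Next I would linearize the objective using concavity. For concave $g_{m,k}$ on $\{0, 1, \ldots, k\}$ one has, for every $y \in \{0,\ldots,k\}$,
\begin{align*}
g_{m,k}(y) = \min_{j \in \{0, \ldots, k-1\}} \bigl( g_{m,k}(j) + (g_{m,k}(j+1)-g_{m,k}(j))(y-j) \bigr),
\end{align*}
because the slopes $g_{m,k}(j+1)-g_{m,k}(j)$ are non-increasing. Introducing continuous variables $t_i$, I would therefore set up the MILP that maximizes $\sum_{i=1}^n t_i$ subject to $\sum_P x_P = k$, $0 \leq x_P \leq n_P$, $x_P \in \mathbb{Z}$, and, for all $i \in [n]$ and $j \in \{0,\ldots,k-1\}$,
\begin{align*}
t_i \leq g_{m,k}(j) + (g_{m,k}(j+1)-g_{m,k}(j))\Bigl(\sum_{P \ni i} x_P - j\Bigr).
\end{align*}
The $x_P$ are the only integer variables (at most $2^n$ of them); the $t_i$ remain continuous, since at the optimum each $t_i$ will equal $g_{m,k}(y_i)$ automatically.

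I would then invoke Lenstra's algorithm (in its mixed-integer form) which solves an MILP in time $f(d)\cdot\mathrm{poly}(L)$, where $d$ is the number of integer variables and $L$ is the input length. Here $d = 2^n$ and $L$ is polynomial in $n$, $m$, $k$, and the encoding sizes of $g_{m,k}$'s values, so the overall running time is $\fpt$ in $n$. A winning committee is finally recovered by picking any $x_P$ candidates from each group $P$; equivalence of the candidates within a group ensures the resulting committee attains the optimal score.

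The main obstacle is handling the non-linear objective, which is where concavity is essential: without it the "$g_{m,k}(y) = \min_j(\ldots)$" representation would fail and the natural linearization would no longer give an upper envelope amenable to LP/MILP techniques. Beyond this, some care is needed to confirm that the encoding size of the MILP stays polynomial in the input, which is immediate given the assumption that the $g_{m,k}$ are (implicitly) polynomial-time computable, and that the grouping by voter pattern can be done in $O(mn)$ time.
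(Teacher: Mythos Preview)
Your proposal is correct and follows the same high-level strategy as the paper: group candidates by their voter pattern (the set of voters ranking them in the top $k$), introduce one integer variable per group (at most $2^n$ of them), linearize the concave objective using continuous auxiliary variables, and invoke Lenstra's mixed-integer algorithm.

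The only technical difference is in how the concave piecewise-linear function $g_{m,k}$ is linearized. The paper introduces, for each voter $i$, rational variables $x_{i,1},\ldots,x_{i,k}\in[0,1]$ with $\sum_j x_{i,j}=y_i$ and maximizes $\sum_{i,j} x_{i,j}\bigl(g_{m,k}(j)-g_{m,k}(j-1)\bigr)$; concavity (non-increasing increments) then forces the optimal $x_{i,j}$ to be the indicators $[y_i\geq j]$. You instead use the tangent-line (upper-envelope) representation $g_{m,k}(y)=\min_j\bigl(g_{m,k}(j)+(g_{m,k}(j+1)-g_{m,k}(j))(y-j)\bigr)$ and a single continuous variable $t_i$ per voter. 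Both are standard, equivalent ways to encode a concave piecewise-linear function in an LP, so the arguments are interchangeable; your version uses fewer continuous variables ($n$ versus $nk$) at the cost of $nk$ inequality constraints, which is immaterial for the $\fpt$ claim.
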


\newcommand{\tttype}{\mathcal{T}}

\begin{proof}
  Our algorithm is based on solving a mixed integer linear program
  (MILP) in $\fpt$-time with respect to the number of integral
  variables. The key trick is to use non-integral variables in such a
  way that in every optimal solution they have to take integral values
  (this technique was first used by Bredereck et
  al.~\shortcite{bre-fal-nie-sko-tal:c:covering-milp}).
  
  Let $k$ be the input committee size and $E = (C,V)$ be
  the input election, where $C = \{c_1, \ldots, c_m\}$ is the set of
  candidates, $V = (v_1, \dots, v_n)$ is the collection of voters.

  We enumerate all the nonempty subsets of $V$ as $S_1, \ldots,
  S_{2^n-1}$.  For each $i \in [2^n-1]$, let $\tttype(S_i)$ denote the
  largest set of candidates that satisfies the following condition:
  Every voter in $S_i$ ranks each candidate from $\tttype(S_i)$ among
  the top $k$ positions and no other voter ranks either of the
  candidates from $\tttype(S_i)$ among top $k$ positions. Note that
  $\tttype(S_1), \ldots, \tttype(S_{2^n})$ is a partition of $C$.
  We illustrate this partition in the following example.
  \begin{example}
    Consider an election $E = (C,V)$ with $C = \{a,b,c,d,e,f\}$ and $V
    = (v_1, \ldots, v_6)$, where the voters have the following
    preference orders (we set the committee size $k = 3$ and, thus, we
    list only top $k$ positions for each vote):
    \newcommand{\tvote}[3]{#1 \succ #2 \succ #3 \succ \cdots}
    \begin{align*}
      v_1 \colon & \tvote c d f, &
      v_2 \colon & \tvote c d e, &
      v_3 \colon & \tvote a b c, \\
      v_4 \colon & \tvote c e f, &
      v_5 \colon & \tvote d e f, &
      v_6 \colon & \tvote a b e.
    \end{align*}
    We have the following sets: $\tttype(\{v_3,v_6\}) = \{a,b\}$
    since only voters $v_3$ and $v_6$ rank $a$ and $b$ on top three
    positions (and there are no other candidates they both rank among
    their top three positions). Then, we have:
    $\tttype(\{v_1,v_2,v_3,v_4\}) = \{c\}$, $\tttype(\{v_1,v_2,v_5\})
    = \{d\}$, $\tttype(\{v_2,v_4,v_5, v_6\}) = \{e\}$, and
    $\tttype(\{v_1,v_4,v_5\}) = \{f\}$. For every other subset $S_i$
    of voters, we have $\tttype(S_i) = \emptyset$.  For example,
    $\tttype\{v_4,v_5\} = \emptyset$ for the following reasons: The
    candidates that both $v_4$ and $v_5$ rank on top three positions
    are $e$ and $f$. However, each of these candidates is ranked among
    top three positions also by some other voter(s).
  \end{example}


  \begin{figure}[t!]

    {\upshape
      \begin{align*}
        \intertext{\quad\quad\quad\quad maximize $\sum_{i=1}^n
          \sum_{j=1}^k x_{i, j} \cdot (g_{m,k}(j) - g_{m,k}(j - 1))$}
        \intertext{\quad\quad\quad\quad subject to: }
        &\text{(a)}\ \ \sum_{i=1}^{2^n-1} z_i = k,                                          \ &\\
        &\text{(b)}\ \ x_{i} = \sum_{j \colon i \in S_j} z_j,                                     \ & i \in [n] \\
        &\text{(c)}\ \ \sum_{j=1}^k x_{i, j} = x_i,                                      \ & i \in [n] \\
        &\text{(d)}\ \ 0 \leq x_{i, j} \leq 1,                                           \ &  i \in [n]; j \in [k] \\
        &\text{(e)}\ \ 0 \leq z_{i} \leq |\tttype(S_i)|, \ & i \in
        [2^n-1]
      \end{align*}
    }
    \caption{\label{fig:milp}The Mixed Integer Linear Program used in the
      proof of Theorem~\ref{thm:fpt-n}.}
      
  \end{figure}

  Our algorithm forms a mixed integer linear program with the
  following variables.  We have $2^n-1$ integer variables, $z_1,
  \ldots z_{2^n-1}$,  where, intuitively, each $z_i$ describes how
  many candidates from the set $\tttype(S_i)$ we take into the winning
  committee.  For each $i \in [n]$ we also have an integer variable
  $x_i$, which describes how many candidates from the top $k$
  positions of the preference order of voter $v_i$ belongs to the
  winning committee. Finally, for each variable $x_i$, we have
  rational variables $x_{i,j}$, $0 \leq x_{i,j} \leq 1$, such that
  (intuitively) each $x_{i,j}$ is $1$ if $x_i$ is at least $j$.  We
  present our mixed integer linear program in Figure~\ref{fig:milp}.
  To solve this program, we invoke Lenstra's famous result in its
  variant for mixed integer
  programming~\cite[Section~5]{len:j:integer-fixed}.

  Now it remains to argue that it indeed outputs a correct solution,
  that is, that the variables $z_1, \ldots, z_{2^n-1}$ describe a
  winning committee. If all the variables have the intended, intuitive
  values (as described in the preceding paragraph), then---with our
  maximization goal in mind---one can verify that variables $z_1,
  \ldots, z_{2^n-1}$ describe a winning committee. Thus we show that,
  indeed, all the variables have their intended values.

  Due to constraints (a) and (e), variables $z_1, \ldots, z_{2^n-1}$
  certainly describe a possible committee of size $k$ (from each set
  $\tttype(S_i)$ we take $z_i$ arbitrary candidates). Constraints (b)
  ensure the correct values of variables $x_1, \ldots,
  x_{n}$. Finally, the maximization goal and constraints (c) ensure
  that each variable $x_{i,j}$ is $1$ exactly if $x_i \geq j$ and is
  $0$ otherwise. This is so, because $g_{m,k}$ is concave. Thus, if for some
  values $j$ and $j'$ with $j < j'$ it was the case that $x_{i,j} < 1$
  and $x_{i,j'} > 0$ then increasing $x_{i,j}$ and decreasing
  $x_{i,j'}$ by the same amount (without breaking constraint (d))
  would yield a higher value of the function to be maximized.
\end{proof}

To summarize, it appears that most (but certainly not all)
top-$k$-counting rules are $\np$-hard to compute. For top-$k$-counting
rules based on concave counting functions, there are good
polynomial-time approximation algorithms and some exact $\fpt$
algorithms. On the other hand, for rules based on convex functions the
situation is much more difficult. Aside from several algorithms that
do not depend on concavity or convexity of the counting problem (for instance the algorithms from Theorem~\ref{thm:ptas} and Proposition~\ref{prop:fptCandidates}), so
far we only have evidence for hardness of approximation.

\section{Conclusions and Further Research}

Aiming at finding a multiwinner analogue of the single-winner
Plurality rule, we have shown that the answer is quite
involved. While intuitively SNTV is a natural analogue of 
Plurality, it fails the fixed-majority criterion (which
Plurality satisfies in the single-winner setting). We have found
that, among all committee scoring rules, only the top-$k$-counting
rules---a class of rules we have defined in this paper---have a chance
of satisfying our criterion, and we have characterized exactly when
this happens. Specifically, we have shown that the committee scoring rules which satisfy the fixed-majority
criterion are exactly those top-$k$-counting rules whose counting functions satisfy a relaxed
variant of convexity.

%
%
For example, the Bloc and Perfectionist rules both satisfy the
fixed-majority criterion and, so, in some sense, they are among the
multiwinner analogues of Plurality (for the Perfectionist
rule this goes quite deep). 
 On the other hand, a variant of the Chamberlin--Courant rule based on
$k$-Approval scoring function is top-$k$-counting, but fails the fixed-majority
criterion.

We believe that it is most interesting to focus on top-$k$-counting
rules based either on convex or on concave counting functions. These
two classes of rules are different in some interesting way. On the
one hand, top-$k$-counting rules based on convex counting functions
are fixed-majority consistent, but seem very hard to compute (with a
few exceptions). On the other hand, top-$k$-counting rules based on
concave counting functions fail the fixed-majority criterion (the
borderline case of Bloc rule excluded), but are much easier to
compute (typically still $\np$-hard, but with constant-factor
polynomial-time approximation algorithms and $\fpt$ algorithms for the
parameterization by the number of voters).


Our work leads to a number of open questions. In the axiomatic
direction, it would be interesting to provide a characterization of
committee scoring rules along the lines of Young's characterization
for their single-winner
counterparts~\shortcite{you:j:scoring-functions}. On the computational
front, it would be interesting to find more powerful algorithms for
computing winning committees under various top-$k$-counting rules
(e.g., for the $\alpha_k$-PAV rule).

\section{Acknowledgments}  

Piotr Faliszewski and Piotr Skowron were supported in part by NCN grant
DEC-2012/06/M/ST1/00358. Piotr Faliszewski was supported in part by
AGH University grant 11.11.230.124 (statutory research) and Piotr Skowron was partially supported by ERC-StG639945.
Arkadii Slinko was  supported by the Royal Society of NZ  
Marsden Fund UOA-254. 
Nimrod Talmon was supported by the DFG Research Training Group MDS (GRK 1408).
The research was also partially supported through the COST action IC1205 (Piotr
Faliszewski's visit to TU Berlin).

\bibliography{grypiotr2006}

\end{document}